%% file: main.tex
\documentclass[sigconf, nonacm]{acmart}

\usepackage{algorithm}
\usepackage{algpseudocode}
\usepackage{multirow}
\usepackage{subcaption}
\usepackage{booktabs}

\newcommand\vldbdoi{XX.XX/XXX.XX}
\newcommand\vldbpages{XXX-XXX}
\newcommand\vldbvolume{14}
\newcommand\vldbissue{1}
\newcommand\vldbyear{2025}
\newcommand\vldbauthors{\authors}
\newcommand\vldbtitle{\shorttitle}
\newcommand\vldbavailabilityurl{URL_TO_YOUR_ARTIFACTS}
\newcommand\vldbpagestyle{plain}
\usepackage{enumitem}

\newtheorem{theorem}{Theorem}
\newtheorem{lemma}[theorem]{Lemma}

\newcommand{\bbb}{\noindent\textbf}

\begin{document}
\title{Crane: An Accurate and Scalable Neural Sketch for Graph Stream Summarization}

\author{Boyan Wang}
\affiliation{%
  \institution{Hefei University of Technology\country{Hefei, China}}
}
\email{wangboyan@mail.hfut.edu.cn}

\author{Zhuochen Fan}
\affiliation{%
  \institution{Pengcheng Laboratory}
  \city{Shenzhen}
  \country{China}
}
\email{fanzc@pku.edu.cn}

\author{Dayu Wang}
\affiliation{%
  \institution{Nanyang Technological University\country{Singapore}}
}
\email{dayu001@e.ntu.edu.sg}

\author{Fangcheng Fu}
\affiliation{%
  \institution{Shanghai Jiao Tong University}
  \city{Shanghai}
  \country{China}
}
\email{ccchengff@sjtu.edu.cn}

\author{Zeyu Luan}
\affiliation{%
  \institution{Pengcheng Laboratory}
  \city{Shenzhen}
  \country{China}
}
\email{luanzy@pcl.ac.cn}

\author{Lei Zou}
\affiliation{%
  \institution{Peking University}
  \city{Beijing}
  \country{China}
}
\email{zoulei@pku.edu.cn}

\author{Qing Li}
\affiliation{%
  \institution{Pengcheng Laboratory}
  \city{Shenzhen}
  \country{China}
}
\email{liq@pcl.ac.cn}

\author{Tong Yang}
\affiliation{%
  \institution{Peking University}
  \city{Beijing}
  \country{China}
}
\email{yangtong@pku.edu.cn}

\begin{abstract}
Graph streams are rapidly evolving sequences of edges that convey continuously changing relationships among entities, playing a crucial role in domains such as networking, finance, and cybersecurity. 
Their massive scale and high dynamism make obtaining accurate statistics challenging with limited memory constraints.
Traditional methods summarize graph streams through hand-crafted sketches, while recent studies have begun to replace these sketches with neural counterparts to improve adaptability and accuracy.
However, this shift faces a major challenge: under limited memory, dominant frequent items tend to overshadow rare ones, hindering the neural network's ability to recover accurate statistics.
To address this, we propose \textit{Crane}, a hierarchical neural sketch architecture for graph stream summarization. 
Crane uses a hierarchical carry mechanism that automatically elevates frequent items to higher memory layers, reducing interference between frequent and infrequent items within the same layer.
To better accommodate real-world deployment, Crane further adopts an adaptive memory expansion strategy that dynamically adds new layers once the occupancy of the top layer exceeds a threshold, enabling scalability across diverse data magnitudes.
Extensive experiments on various datasets ranging from $20K$ to $60M$ edges demonstrate that Crane reduces estimation error by roughly $10\times$ compared to state-of-the-art methods.
\end{abstract}

\maketitle

\input{MainText/1-Intro}

\input{MainText/2-Related_Work}

\input{MainText/3-Design}

\input{MainText/4-Maths}

\input{MainText/5-Evaluation}
\input{MainText/6-Conclusion}

\clearpage
\bibliographystyle{unsrt}
\bibliography{sample}

\clearpage
\appendix
\input{MainText/7-Appendix}

\end{document}

%% file: MainText/1-Intro.tex
\vspace{-0.1in}
\section{Introduction}
\label{sec:intro}

\subsection{Background and Motivation}

Graph streams, as time-ordered sequences of edge updates, encapsulate the continuously evolving relationships among entities in dynamic systems. 
Formally, a graph stream $S_G$ is defined as a sequence $\{e_1, e_2, \dots, e_i, \dots\}$, where each item $e_i$ represents a directed edge from an origin node $o_i$ to a destination node $d_i$, augmented with a varying weight $w_i$ at timestamp $i$. 
This model accommodates repeated edges with varying weights over time, enabling the accumulation of edge strengths and the construction of a dynamically extending graph $G(V, E)$ that reflects ongoing interconnections or interactions among entities. 
Such representations are particularly suited for high-velocity data environments and have found widespread applications in big data domains \cite{wei2018topppr,ding2019efficient,liu2020truss,pacaci2020regular}, including network security \cite{eswaran2018spotlight,wu2019enhanced,bhatia2021mstream,bhatia2023sketch}, financial transaction analysis \cite{khodabandehlou2024fifraud,zhang2024survey,qi2025ldbc,han2025tempasd}, social network analysis \cite{zhao2012large,li2015real,zhou2014event,chen2024querying}, and beyond, where scalability and adaptability to streaming inputs are paramount.

Recently, there has been a notable shift in focus from traditional handcrafted data structures~\cite{zhao2011gsketch,tang2016tcm,gou2019fast,ko2020incremental,gou2022gss,chen2022horae,chen2022scube,jiang2023auxo} to learnable neural alternatives in the domain of graph stream summarization.
These neural structures seek to harness flexible, neural-driven compression mechanisms to attain superior compression ratios.
We summarize their main advantages as follows: 
first, the adaptive allocation of greater memory resources to salient information; 
and second, the utilization of global contextual insights to dynamically reconfigure storage layouts.

However, applying neural networks to graph stream summarization remains a challenging endeavor.
The fundamental challenge lies in effectively compressing the massive, unbounded information of high-velocity graph streams into strictly constrained memory spaces while preserving the fidelity of edge weights.

To date, the pioneering work, Mayfly~\cite{feng2023mayfly}, has demonstrated the potential of neural networks to surpass traditional handcrafted sketches. 
It introduces a memory-augmented network with bounded gradients, leveraging a pre-training followed by fine-tuning paradigm to achieve accuracy comparable to or exceeding that of handcrafted sketches across datasets of varying scales. 
Mayfly further incorporates hash-based sharding to enable manual expansion of memory capacity at the onset of a task. 
Additionally, several designs for neural networks in general data streams offer valuable insights.
For instance, MetaSketch~\cite{cao2023meta} establishes a meta-learning framework on synthetic data for general item frequency estimation in data streams, while its successor LegoSketch~\cite{feng2025lego} extends by introducing a modular architecture of memory bricks for more flexible adaptation to dynamic budgets.

Nevertheless, these approaches encounter critical limitations.
\textbf{First and foremost}, they lack mechanisms to differentiate and isolate items based on frequency, a necessity highlighted in traditional sketch literature~\cite{yang2018elastic,zhou2018cold}.
Real-world graph streams typically exhibit highly skewed distributions, where a small number of frequent items coexist with a vast number of infrequent items.
In existing neural architectures, when these diverse items are co-located within the same memory address space, the dominant updates from frequent items inevitably overshadow the subtle signals of infrequent ones.
Our empirical observations confirm that this leads to severe overestimation of infrequent items.
This issue becomes particularly pronounced in neural contexts: while neural networks can effectively resolve collisions among items of similar magnitude, they remain ineffective against the mixture of frequent and infrequent items.
Fundamentally, this is because existing encoding networks lack contextual awareness and fail to estimate stream frequencies explicitly.
Lacking this discrimination, the subsequent calibration network is unable to disentangle the subtle features of infrequent items from the overwhelming signals of frequent items, thereby failing to recover the fidelity of the more vast number of infrequent items.
\textbf{Second}, the reliance on hash-based sharding for memory management necessitates manual configuration at task initiation.
This static allocation strategy is ill-suited for real-world scenarios where stream volumes escalate continuously, preventing the system from adapting dynamically to growing storage demands.

\vspace{-0.05in}
\subsection{Our Proposed Solution}

To address these challenges, we introduce \textbf{Crane}, a hierarchical neural architecture designed for robust graph stream summarization.
Unlike flat structures that force all items into a single shared space, Crane organizes memory into a stack of layers and employs a \textbf{hierarchical carry mechanism} to physically segregate items based on their magnitude.
This design directly counters the interference problem: by automatically lifting frequent items to higher layers, Crane ensures they do not monopolize the limited capacity reserved for infrequent items in lower layers.

Functionally, Crane operates through a streamlined workflow consisting of \textbf{Store} and \textbf{Query} phases.
During the Store phase, a lightweight \textbf{encoding network} first maps each incoming edge to a two-dimensional basis matrix.
Instead of static accumulation, the insertion follows a bottom-up trajectory governed by our proposed \textbf{hierarchical carry mechanism}.
Specifically, the system performs an on-the-fly estimation at the current layer; if the estimated frequency exceeds a learned threshold, the excess mass is "carried" (promoted) to the subsequent layer.
This process dynamically disentangles the traffic: recurring patterns migrate to upper layers like arithmetic carries, while sporadic interactions remain at the bottom.
During the \textbf{Query} phase, Crane retrieves partial estimates from all active layers and aggregates them via a learnable linear decoder to produce the final high-fidelity frequency.

\begin{sloppypar}
Furthermore, this hierarchical workflow inherently enables \textbf{smooth automatic memory expansion}.
In contrast to hash-based sharding which requires pre-allocating fixed buckets, Crane monitors the saturation of the top-most layer during the Store phase.
New upper layers are dynamically allocated only when the current hierarchy is saturated.
This allows the memory footprint to grow logarithmically with stream volume, optimizing resource utilization in unbounded stream scenarios.
Experimental results demonstrate that Crane reduces the Average Relative Error (ARE) by at least $10\times$ compared to state-of-the-art methods.
\end{sloppypar}

Our contributions can be summarized as follows:
\begin{itemize}[leftmargin=1em]\vspace{-0.05in}
\item We propose Crane, a hierarchical neural architecture for graph stream summarization that automatically segregates frequent and infrequent items via a hierarchical carry mechanism.
\item We further introduce a smooth automatic memory expansion mechanism to adapt to real-world scenarios with continuously expanding network traffic, while reducing the memory space complexity from linear to logarithmic.
\item Experimental results demonstrate that Crane reduces Average Relative Error (ARE) by roughly $10\times$ compared to state-of-the-art (SOTA) methods under identical memory constraints.
\end{itemize}

\begin{figure}[t]
    \centering       \includegraphics[width=0.99\columnwidth]{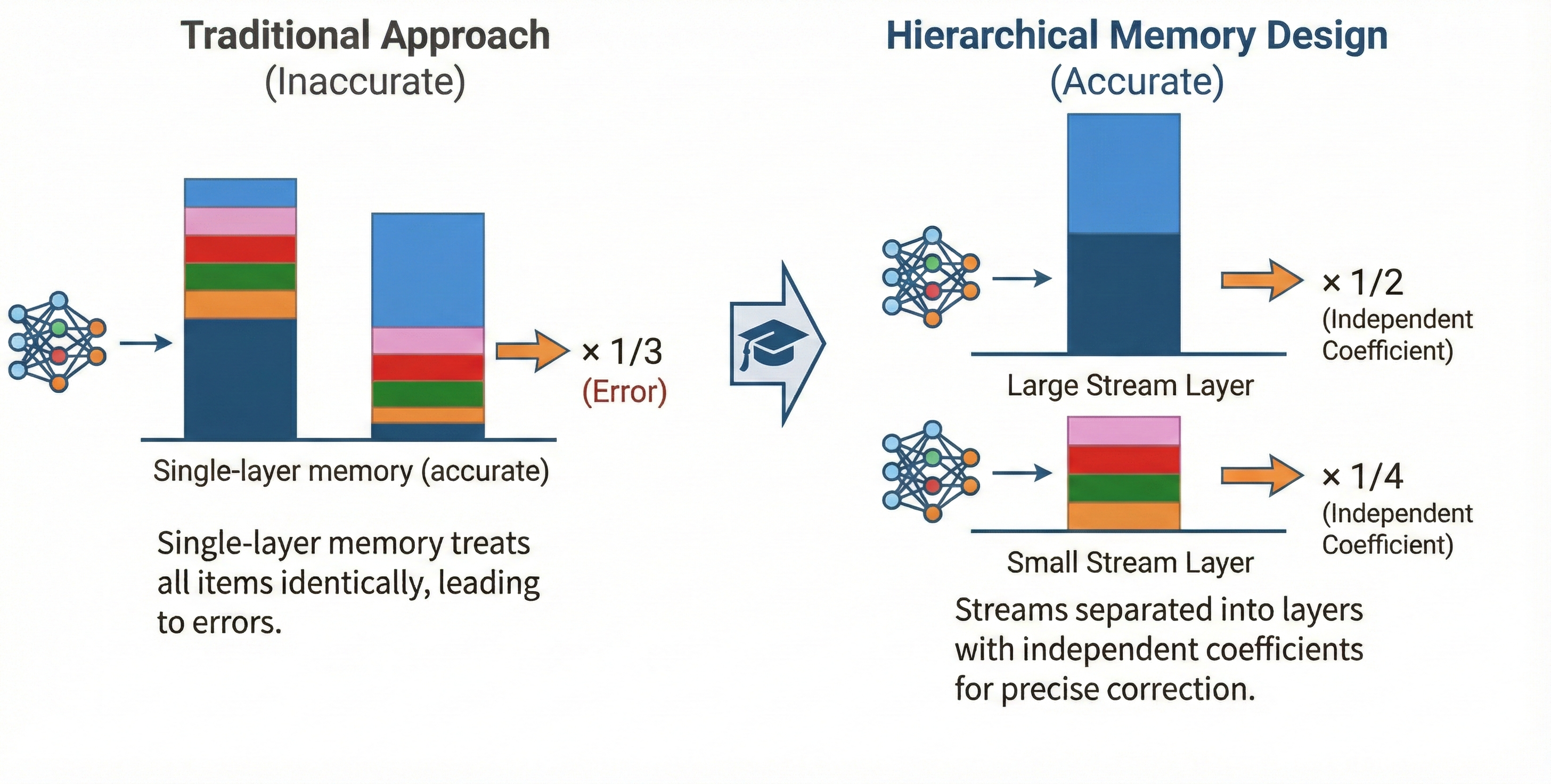}
    \vspace{-0.1in}
    \caption{Overview of the motivating example.}
    \vspace{-0.1in}
    \label{fig:intro-sub}
\end{figure}

%% file: MainText/2-Related_Work.tex
\section{Related Work}

Graph stream summarization has attracted considerable research attention in recent years. 
We categorize existing methods into handcrafted structures and neural structures.

\subsection{Handcrafted Structures}

Handcrafted graph stream summarization methods primarily focus on achieving accurate estimates under relatively abundant memory conditions, subsequently extending their accuracy to scenarios with limited memory. 
Notable approaches in this domain include TCM~\cite{tang2016tcm}, GSS~\cite{gou2019fast,gou2022gss}, and Auxo~\cite{jiang2023auxo}. 
TCM~\cite{tang2016tcm} extends the idea of the Count-Min Sketch (CMS) \cite{cmsketch}\footnote{CMS is a probabilistic data structure for estimating item frequencies in one-dimensional data streams. 
It uses $d$ arrays, each with $w$ counters and an associated hash function. On an item's arrival, it increments the hashed counter in each array by 1, and a query returns the minimum of these $d$ counters to minimize over-estimation.
Its variants include the Count Sketch \cite{csketch}, the Conservative Update Sketch \cite{cusketch}, etc.} by leveraging a pre-defined $m \times m$ compressed matrix $M$ to store graph stream items, where it employs a hash function $h(\cdot)$ with range $[0, m)$ to map an item to a corresponding bucket in $M$.
For an incoming edge $s_i \to d_i$, TCM adds its associated weight $w_i$ to the value in bucket $M[h(s_i), h(d_i)]$, merging nodes with the same hash value. 
This supports boolean queries (e.g., edge existence) and aggregation queries (e.g., summing weights in the $h(x)$-th column for outgoing edges from $x$). 
However, TCM suffers from poor query accuracy due to hashing collisions.

To improve accuracy, GSS~\cite{gou2019fast,gou2022gss} incorporates a $f$-bit fingerprint $\xi_v$ (a hash value) for each node $v$ alongside the weight in the compressed matrix. GSS comprises a $m \times m$ matrix $M$ and an adjacency list as an extra buffer. 
Upon inserting an item $e_i = (\left \langle s_i, d_i \right \rangle; w_i; t_i)$, GSS computes fingerprints $\left \langle \xi_{s_i}, \xi_{d_i}\right \rangle$ and stores them with $w_i$ in $M[h(s_i), h(d_i)]$. If edges with different fingerprint pairs collide, the newcomer is stored in the buffer. 
To minimize buffer size, GSS uses square hashing to generate multiple candidate buckets $\{M[h_k(s_i), h_j(d_i)] \mid 1 \leq k \leq r, 1 \leq j \leq r\}$ for each edge, recording $\left \langle \xi_{s_i}, \xi_{d_i}, w_i, idx_{pair}\right \rangle$ in an empty candidate, where $idx_{pair}$ denotes the indices. 
The leftover edges go to the buffer. For large-scale streams, the growing buffer incurs high memory and insertion costs.
Auxo~\cite{jiang2023auxo} addresses scalability issues in prior designs by proposing a prefix-embedded tree (PET) that extends building blocks in a tree-style manner, achieving logarithmic insert/query times and reducing storage by a scale $\log |E|$, where $|E|$ is the edge set size. 
PET embeds binary prefixes into the tree, omitting $i$-bit prefixes on level $i$. 
To enhance memory utilization, Auxo introduces a proportional PET that expands levels incrementally. 
However, these methods lack explicit accuracy correction mechanisms, limiting adaptability under severe memory constraints.

\subsection{Neural Structures}

Neural graph stream summarization methods focus on recovering precise estimates under severe memory constraints using neural networks. 
The pioneering method Mayfly~\cite{feng2023mayfly} introduces the first neural data structure specifically for graph stream summarization, surpassing handcrafted solutions like TCM and GSS by leveraging memory-augmented neural networks (MANNs) \cite{weston2014memory,graves2016hybrid} and one-shot meta-learning. 
Mayfly is trained in two offline phases: the larval phase, where it acquires fundamental summarization abilities from larval meta-tasks generated using synthetic data to promote generalization across distributions; and the metamorphosis phase, where it rapidly adapts to real graph streams via metamorphosis meta-tasks sampled from target data, enhancing specialization. 
Architecturally, Mayfly employs a novel joint storage paradigm that separately compresses node and edge weights within a shared memory module to preserve graph structural information, enabling efficient one-pass processing. 
It incorporates information pathways—configurable routes through the network that guide data flow during meta-learning—to support diverse graph queries without retraining, including edge queries (e.g., estimating edge weights), node queries (e.g., node degrees), connectivity queries (e.g., reachability), path queries (e.g., shortest path approximations), and subgraph queries (e.g., motif counting). 
This design mitigates issues like hash collisions in low-budget scenarios and achieves superior accuracy by exploiting deep implicit features.

Related neural data stream structures, while not graph-specific, offer insights for graph applications. MetaSketch~\cite{cao2023meta} is a neural sketch for general item frequency estimation, using meta-learning on synthetic Zipf-distributed meta-tasks in a pre-training phase, with basic and advanced versions that adapt to skewed real streams via online sampling, outperforming traditional sketches through differentiable encoding/decoding. 
LegoSketch~\cite{feng2025lego}, a scalable MANN for data stream sketching, coordinates multiple memory bricks for dynamic adaptation to varying space budgets and domains, incorporating normalized multi-hash embedding for domain-agnostic scalability, self-guided weighting loss for meta-task optimization, and memory scanning for feature reconstruction, achieving better space-accuracy trade-offs. 
However, they share two common issues: 1) lack of contextual optimization for memory management, limiting compression; 2) hash-based sharding for memory expansion struggles with continuously growing data volumes.

%% file: MainText/3-Design.tex
\section{Crane Design}

\begin{figure*}[t]
    \centering
    \includegraphics[width=0.99\textwidth]{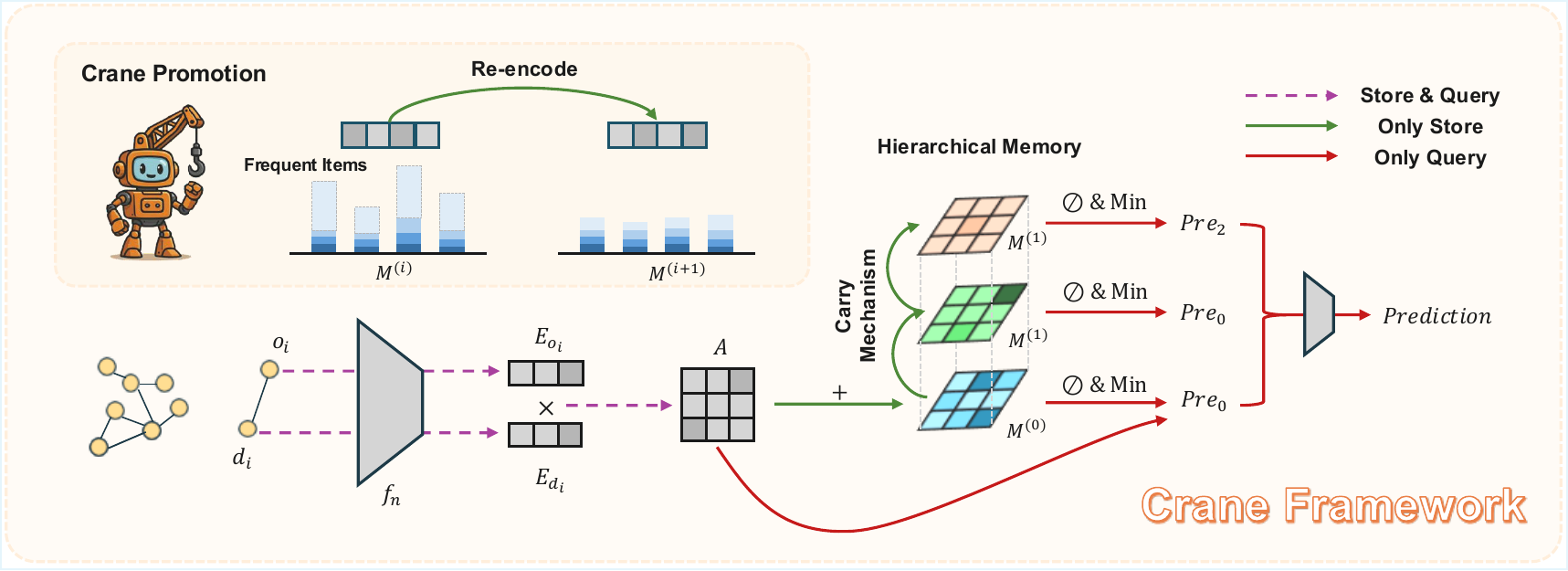}
    \vspace{-0.15in}
    \caption{The overall architecture of Crane.}
    \label{fig:design-main}
    \vspace{-0.1in}
\end{figure*}

\subsection{Overview}

At an operational level, Crane adheres to the fundamental \textbf{Store} and \textbf{Query} paradigm characteristic of graph stream summarization. 
However, distinct from traditional flat architectures, Crane introduces three synergistic innovations to maximize estimation fidelity and enable dynamic scalability: \textbf{Hierarchical Memory}, \textbf{Learnable Encoder--Decoder}, and \textbf{Hierarchical Carry Mechanism}, as illustrated in Figure \ref{fig:design-main}.
By integrating these components into the standard workflow, Crane transforms the static sketching process into a dynamic, hierarchical one. 
Specifically, the learnable encoder--decoder maps edges into a multi-layered memory structure, while the hierarchical carry mechanism automatically governs the vertical flow of information.
Together, these designs allow Crane to physically isolate high-frequency items from low-frequency noise and achieve smooth, automatic memory expansion as the stream volume grows.

In the remainder of this section, we detail the implementation of these components.
We first present the design of the learnable encoder--decoder and its interaction with the hierarchical memory in \S~\ref{sec:hierarchical-learnable-memory}.
We then demonstrate how the hierarchical carry mechanism effectively separates large and small items across layers in \S~\ref{sec:hierarchical-carry-mechanism}.
Next, we explain how this carry mechanism unlocks the smooth automatic expansion capability in \S~\ref{sec:auto-memory-expansion}.
Finally, we describe the training strategy based on purely synthetic data in \S~\ref{sec:training}, ensuring robust generalization to real-world graph streams.

\subsection{Hierarchical Learnable Memory}
\label{sec:hierarchical-learnable-memory}

\begin{figure}[t]
    \centering       \includegraphics[width=0.95\columnwidth]{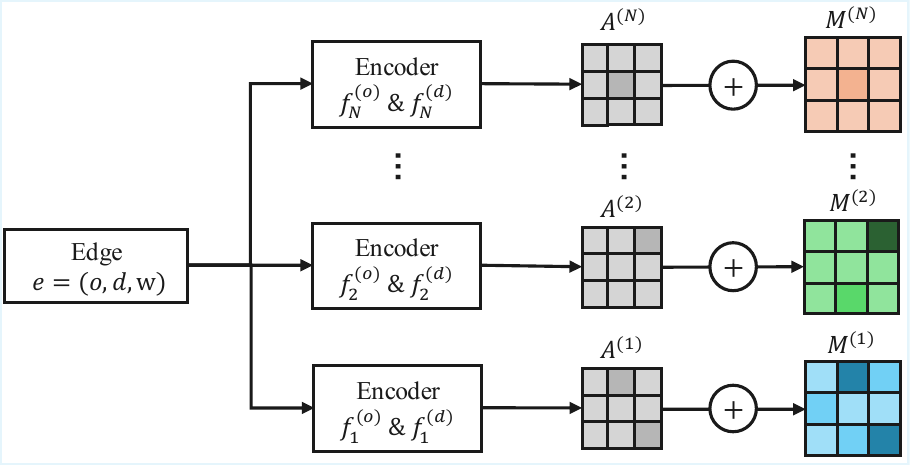}
    \vspace{-0.1in}
    \caption{Hierarchical Learnable Memory.}
    \vspace{-0.1in}
    \label{fig:hierarchical-learnable-memory}
\end{figure}

As shown in Figure~\ref{fig:hierarchical-learnable-memory}, Crane realizes hierarchical learnable hashing by assigning a distinct neural encoder $f_i(\cdot)$ to each memory layer $i \in \{1,\dots,N\}$.
These encoders implement differentiable hash functions that map discrete edge identifiers to continuous basis matrices.
Using different encoders across layers allows the model to redistribute and progressively disentangle collisions: higher layers can selectively preserve or cancel conflicts that arise in lower layers, rather than inheriting them rigidly.

\subsubsection{Store path.}
Prior neural data structures such as Mayfly and Lego Sketch employ carefully engineered binary encodings and complex transformation pipelines (e.g., multi-stage representation modules, sparse attention, or normalized multi-hash embeddings) to couple traditional hashing with neural mappings.
%
%
However, these complex transformation pipelines introduce highly non-linear optimization landscapes. 
We argue that for graph streams, the core inductive bias lies in the structural coupling between source and destination nodes. 
Consequently, Crane adopts a minimal design based on MLPs and an outer-product operation, which both reduces model complexity and matches the two-dimensional query structure of point lookups in graph streams.

As illustrated in the encoding branch of Figure~\ref{fig:design-main}, Crane treats each edge not as a point, but as a two-dimensional signal pattern (a \textit{basis matrix}).
The store operation simply superimposes this unique pattern onto the shared memory surface via addition.
This strategy distributes the edge's information across the entire grid, allowing the neural encoders to naturally minimize interference through ``soft'' superposition rather than rigid hash collisions.

The formal procedure for this process is outlined in the \textsc{Store} module of Algorithm~\ref{alg:crane-store-query}.
Specifically, given an input edge
\begin{equation}
e = (o_i, d_i, w_i),
\end{equation}
we first obtain binary encodings $B_{o_i}$ and $B_{d_i}$ of the source and destination endpoints.
Two independent encoders are then applied to these encodings to obtain continuous embeddings:
\begin{equation}
\mathbf{E}_{o_i} = f^{(o)}(B_{o_i}), \quad
\mathbf{E}_{d_i} = f^{(d)}(B_{d_i}).
\end{equation}
Both encoders use non-negative activation functions (e.g., ReLU), ensuring that the generated embeddings $\mathbf{E}_{o_i}, \mathbf{E}_{d_i} \in \mathbb{R}^{H \times W}$ are non-negative element-wise, where $H$ and $W$ denote the height and width of the encoding matrix, respectively.
We construct a two-dimensional basis matrix $A \in \mathbb{R}^{H \times W}$ via the outer product:
\begin{equation}
A = \mathbf{E}_{o_i}\, \mathbf{E}_{d_i}^\top + \varepsilon,
\end{equation}
where $\varepsilon$ is a small positive constant added element-wise to avoid zero entries and stabilize subsequent division.
Each layer maintains a memory matrix $M \in \mathbb{R}^{H \times W}$, which is initialized to zero.
The write operation at a given layer simply accumulates this basis matrix into the corresponding memory $M$:
\begin{equation}
M \leftarrow M + w_i \cdot A.
\end{equation}
%

\subsubsection{Query path.}
Mayfly follows the Count-Min principle at query time by aggregating multiple memory positions and feeding a rich set of statistics (e.g., sums, maxima, minima, original node/edge encodings, and even the total item count) into a decode network.
%
%
%
Analytically, feeding such a heterogeneous set of statistics introduces two fundamental issues. 
First, including the total item count predisposes the model to \textit{shortcut learning}: the decoder tends to overfit to the global stream magnitude rather than learning to resolve local hash collisions. 
This dependency explains the training instability observed when global signals are removed, as the gradients lose their primary (albeit coarse) guide. 
Second, sum-based pooling aggregates collision noise from all buckets, inherently suffering from a low Signal-to-Noise Ratio (SNR) compared to the minimum statistic, which offers the tightest upper bound in Count-Min logic. 
incorporating these noisy signals dilutes the information density and forces the decoder to filter out irrelevant background variance.

Crane deliberately avoids these redundant paths.
Instead, we use the most informative statistic for Count-Min style decoding—the minimum of the ratio between memory and basis matrix—as the \emph{only} input to the decoder, thus providing a clear and stable supervisory signal.

Retrieving the frequency follows the inverse flow shown in Figure~\ref{fig:design-main}.
We first regenerate the target edge's basis pattern and compare it element-wise against the accumulated memory.
Since the memory is a sum of non-negative signals, the ratio between the memory and the basis at any position gives a candidate frequency estimate (an upper bound).
By taking the minimum of these ratios, Crane rigorously filters out the noise contributed by other overlapping edges, recovering estimate.
The formal insertion procedure is detailed in the Query operation of Algorithm \ref{alg:crane-store-query}.

We provide the step-by-step pseudocode in Algorithm~\ref{alg:crane-store-query} and detail its mathematical formulation below.
Given a memory $M$ and a basis matrix $A$ constructed for the query edge, we define
\begin{equation}
Q(M, A) = \min ( M \oslash A )
\end{equation}
where $\oslash$ denotes element-wise division and the minimum is taken over all positions.
This scalar $Q(M, A)$ serves as the per-layer estimate produced by Count-Min style decoding.


\subsubsection{Multi-layer decoding.}
In the hierarchical setting, each layer $i$ maintains its own memory $M^{(i)}$ and encoder, yielding a per-layer basis matrix $A^{(i)}$ and estimate
\begin{equation}
q_i(e) = Q\bigl(M^{(i)}, A^{(i)}\bigr).
\end{equation}
We collect these into a vector
\begin{equation}
\mathbf{q}(e) = [q_1(e), q_2(e), \dots, q_N(e)]^\top.
\end{equation}
Crane then employs a single linear regression layer as a learnable decoder to aggregate layer-wise estimates into the final prediction:
\begin{equation}
\hat{y}(e) = \mathbf{w}^\top \mathbf{q}(e) + b,
\end{equation}
where $\mathbf{w} \in \mathbb{R}^N$ and $b \in \mathbb{R}$ are trainable parameters.
This decoder is intentionally lightweight, so that the modeling capacity is primarily devoted to the encoders and the hierarchical memory itself.

\begin{algorithm}[t]
\small
\caption{Store and Query Operations}
\label{alg:crane-store-query}
\begin{algorithmic}[1]
\State \textbf{Data:} Edge $e=(o_i,d_i,w_i)$; number of layers $N$;
       layer--specific encoders $\{f_i^{(o)},f_i^{(d)}\}_{i=1}^N$;
       memories $\{M^{(i)}\in\mathbb{R}^{H\times W}\}_{i=1}^N$;
       decoder parameters $\mathbf{w}\in\mathbb{R}^N,b\in\mathbb{R}$.
\vspace{0.2em}
\Statex
\State \textbf{Operation} \textsc{Store}$(e)$:
\State $B_{o_i} \leftarrow \textsc{BinaryEncode}(o_i)$; \quad
       $B_{d_i} \leftarrow \textsc{BinaryEncode}(d_i)$
\For{$i \leftarrow 1$ \textbf{to} $N$}
  \State $\mathbf{E}_{o_i}^{(i)} \leftarrow f_i^{(o)}(B_{o_i})$; \quad
         $\mathbf{E}_{d_i}^{(i)} \leftarrow f_i^{(d)}(B_{d_i})$
  \State $\mathbf{E}_{o_i}^{(i)},\mathbf{E}_{d_i}^{(i)} \leftarrow
         \max(\mathbf{E}_{o_i}^{(i)},0),\max(\mathbf{E}_{d_i}^{(i)},0)$
  \State $A^{(i)} \leftarrow 
         \mathbf{E}_{o_i}^{(i)} (\mathbf{E}_{d_i}^{(i)})^\top + \varepsilon$
  \State $M^{(i)} \leftarrow M^{(i)} + w_i \cdot A^{(i)}$
\EndFor
\vspace{0.2em}
\Statex
\State \textbf{Operation} \textsc{Query}$(e)$:
\State $B_{o_i} \leftarrow \textsc{BinaryEncode}(o_i)$; \quad
       $B_{d_i} \leftarrow \textsc{BinaryEncode}(d_i)$
\For{$i \leftarrow 1$ \textbf{to} $N$}
  \State $\mathbf{E}_{o_i}^{(i)} \leftarrow f_i^{(o)}(B_{o_i})$; \quad
         $\mathbf{E}_{d_i}^{(i)} \leftarrow f_i^{(d)}(B_{d_i})$
  \State $A^{(i)} \leftarrow 
         \mathbf{E}_{o_i}^{(i)} (\mathbf{E}_{d_i}^{(i)})^\top + \varepsilon$
  \State $R^{(i)} \leftarrow M^{(i)} \oslash A^{(i)}$
  \State $q_i(e) \leftarrow \min_{p,q} R^{(i)}_{pq}$
\EndFor
\State $\mathbf{q}(e) \leftarrow [q_1(e),\dots,q_N(e)]^\top$
\State \Return $\hat{y}(e) \leftarrow \mathbf{w}^\top \mathbf{q}(e) + b$
\end{algorithmic}
\end{algorithm}

\vspace{-0.07in}
\subsection{Hierarchical Carry Mechanism}
\label{sec:hierarchical-carry-mechanism}
\vspace{-0.03in}

\begin{figure*}[t]
    \centering
    \includegraphics[width=0.99\textwidth]{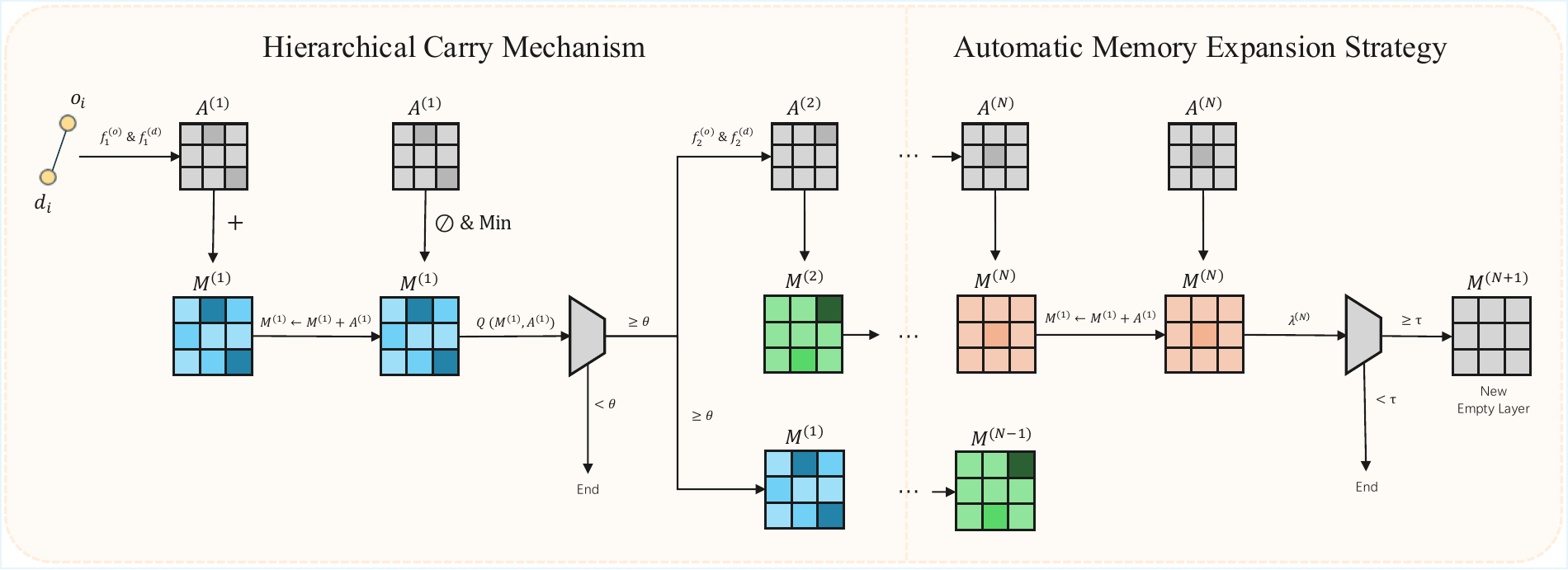}
    \vspace{-0.15in}
    \caption{Hierarchical Carry Mechanism and Automatic Memory Expansion Strategy.}
    \label{fig:carry-expand}
    \vspace{-0.1in}
\end{figure*}

When large and small items are stored in the same memory cells, the neural decoder must infer a large dynamic range from heavily entangled signals, which is intrinsically difficult.
A natural remedy is to separate items of different magnitudes across layers.
However, realizing such a separation faces two key challenges.
First, mainstream sketching methods—both neural and hand-crafted—perform context-free updates, so they cannot directly access an estimate of the current items size during insertion.
Second, modifying the encoding as a function of observed counts may break consistency between past and future encodings and complicate decoding.

Crane addresses both challenges with a hierarchical carry mechanism inspired by positional numeral systems.
During each write, we perform an inexpensive local query to obtain a current estimate of the edge frequency and, if necessary, promote part of its mass to the next layer.
Because we reuse the same encoding for both local query and promotion, encoding consistency is preserved.

Figure~\ref{fig:carry-expand} (Left) charts this bottom-up promotion flow.
After updating the current layer $M^{(i)}$, Crane immediately queries the edge's local estimate $Q(M^{(i)}, A^{(i)})$.
As shown by the decision branches, if this estimate exceeds the threshold $\theta$, the overflow is ``carried'' upward to layer $i+1$; otherwise, the process terminates.
This conditional logic, formalized in Algorithm~\ref{alg:crane-carry}, effectively acts as a sieve, filtering heavy hitters from the noise below.

Consider an edge $e$ and its basis matrix $A^{(i)}$ at layer $i$ with memory $M^{(i)}$.
We first obtain an estimate of its frequency at this layer:
\begin{equation}
\hat{y}_i(e) = Q(M^{(i)}, A^{(i)}).
\end{equation}
Let $\theta > 0$ be a user-specified carry threshold.
We define an integer carry factor
\begin{equation}
T^{(i)}_e = \left\lfloor \frac{\hat{y}_i(e)}{\theta} \right\rfloor_+,
\end{equation}
where $\lfloor \cdot \rfloor_+$ denotes rounding down and clipping to non-negative integers.
If $T_e > 0$, we apply a carry operation:
\begin{align}
M^{(i+1)} &\leftarrow M^{(i+1)} + T^{(i)}_e \cdot A^{(i)}, \\
M^{(i)} &\leftarrow M^{(i)} - \theta T^{(i)}_e \cdot A^{(i)}.
\end{align}
Intuitively, items whose estimated count exceeds $\theta$ are partially promoted to the next layer, while their contribution in the current layer is reduced proportionally.
Collisions may be carried upwards together with large items, but thanks to the layer-specific encoders introduced in \S\ref{sec:hierarchical-learnable-memory}, the higher layer can selectively preserve or resolve these collisions.

\subsubsection{Mini-batch conservative carry.}
\label{sec:mini-batch}
A straightforward implementation that processes edges strictly sequentially yields the most accurate carry decisions, but severely limits parallelism: the carry factor $T_e$ for each edge depends on the exact sequence of prior updates and conflicts.
Batch processing introduces a tension between correctness and efficiency.
A naive conservative strategy that carries at most once per conflicting cell can significantly under-promote large items, while an aggressive strategy that treats every conflict as a carry candidate may violate the non-negativity of memory entries.

To improve parallelism with minimal accuracy loss, Crane adopts a mini-batch conservative carry scheme.
We process edges in mini-batches $B$ and perform carries only when the current batch-wise estimate clearly exceeds the threshold.

Let $A_e$ denote the basis matrix generated for a single edge $e$.
For a mini-batch of edges $B$, we compute the aggregated basis matrix $A_B$ and update the memory as follows:
\begin{align}
A^{(i)}_B &= \sum_{e \in B} A^{(i)}_e, \\
M^{(i)} &\leftarrow M^{(i)} + A^{(i)}_B.
\end{align}
We then compute, for each edge $e \in B$, a conservative binary carry indicator
\begin{equation}
T^{(i)}_B = \left\lfloor \frac{Q(M^{(i)}, A^{(i)}_B)}{\theta} \right\rfloor_+,
\end{equation}
and update the adjacent layers as
\begin{align}
M^{(i+1)} &\leftarrow M^{(i+1)} + T^{(i)}_B \cdot A^{(i)}_B, \\
M^{(i)}     &\leftarrow M^{(i)}     - \theta T^{(i)}_B \cdot A^{(i)}_B.
\end{align}
Each edge is thus promoted at most once per mini-batch, which corresponds to a conservative approximation of the fully sequential process.
Empirically, this mini-batch carry mechanism substantially improves parallel efficiency while having negligible impact on estimation accuracy.

\begin{algorithm}[t]
\small
\caption{Hierarchical Carry Mechanism: Sequential vs. Mini-batch}
\label{alg:crane-carry}
\begin{algorithmic}[1]
\State \textbf{Data:} Layer index $i$; memories $M^{(i)},M^{(i+1)}$;
       carry threshold $\theta>0$;
       stream of edges $\mathcal{S}$ with basis matrices $\{A_e^{(i)}\}$.
\vspace{0.3em}
\Statex
\State \textbf{Module} $\mathsf{SequentialCarry}(\mathcal{S}, i)$:
\For{edge $e \in \mathcal{S}$ in arrival order}
  \State $q_i(e) \leftarrow Q\bigl(M^{(i)}, A_e^{(i)}\bigr)$
  \State $T^{(i)}_e \leftarrow \bigl\lfloor q_i(e)/\theta \bigr\rfloor_{+}$
  \If{$T_e > 0$}
    \State $M^{(i+1)} \leftarrow M^{(i+1)} + T^{(i)}_e \cdot A_e^{(i)}$
    \State $M^{(i)}   \leftarrow M^{(i)}   - \theta T^{(i+1)}_e \cdot A_e^{(i)}$
  \EndIf
\EndFor
\vspace{0.4em}
\Statex
\State \textbf{Module} $\mathsf{MiniBatchConservativeCarry}(\mathcal{S}, i)$:
\State Partition $\mathcal{S}$ into mini-batches $\{B_1,B_2,\dots\}$
\For{mini-batch $B$}
  \State $A^{(i)}_B \leftarrow \sum_{e \in B} A_e^{(i)}$
  \State $M^{(i)} \leftarrow M^{(i)} + A_B$
  \State $q_i^{B} \leftarrow Q\bigl(M^{(i)}, A_B\bigr)$
  \State $T_B \leftarrow \bigl\lfloor q_i^{B}/\theta \bigr\rfloor_{+}$
  \If{$T_B > 0$}
    \State $M^{(i+1)} \leftarrow M^{(i+1)} + T^{(i)}_B \cdot A^{(i)}_B$
    \State $M^{(i)}   \leftarrow M^{(i)}   - \theta T^{(i+1)}_B \cdot A^{(i)}_B$
  \EndIf
\EndFor
\end{algorithmic}
\end{algorithm}

\subsection{Automatic Memory Expansion Strategy}
\label{sec:auto-memory-expansion}

Since the hierarchical carry mechanism repeatedly promotes large items to higher layers, Crane naturally supports smooth, automatic expansion and can therefore operate in settings where the total number of edges is unknown in advance.

Prior work such as Mayfly uses hash-based sharding: a heavy stream is split into multiple substreams, each stored in an independent sketch, in order to reduce collisions and ease the burden on the neural decoder.
In contrast, Crane relies on vertical separation across layers.
The hierarchical carry mechanism continually lifts frequent items from lower to higher layers and, in principle, does not impose a hard upper bound on the number of layers.
This property gives Crane a natural path to smooth automatic expansion and makes it well-suited to scenarios with unknown or highly variable stream sizes.

Figure~\ref{fig:carry-expand} (Right) illustrates this elastic growth.
Crane continuously monitors the saturation of the topmost layer $M^{(N)}$.
Once the load surpasses a safety margin, the system triggers the creation of a ``New Empty Layer'' $M^{(N+1)}$ atop the stack, as depicted.
This mechanism, detailed in Algorithm~\ref{alg:auto-memory-expansion}, allows the memory hierarchy to expand organically with the stream volume, avoiding the need for pre-allocation.

During training, we allocate a maximum number of layers $N_{\max}$ and generate synthetic streams such that the highest layer is frequently activated, exposing the model to the full spectrum of promotion behaviors.
At inference time, we start with a small number of layers and expand the hierarchy only when necessary.
Let the current top layer be indexed by $L$ with memory matrix $M^{(L)} \in \mathbb{R}^{H \times W}$.
We define a simple utilization indicator for the top layer as the average mass per cell:
\begin{equation}
    \lambda^{(L)} \;=\; \frac{1}{H W} \sum_{p=1}^{H} \sum_{q=1}^{W} M^{(L)}_{pq}.
\end{equation}
Given a user-specified load threshold $\tau > 0$ and an upper bound $N_{\max}$ on the number of layers, Crane expands the memory hierarchy according to
\begin{equation}
    \text{if } \lambda^{(L)} > \tau \text{ and } L < N_{\max},
    \quad \text{then add a new layer } M^{(L+1)} \leftarrow \mathbf{0}.
\end{equation}
We then continue processing the stream with the enlarged hierarchy.
Subsequent carry operations automatically transfer mass into the newly added layer without any change to the insertion or query interfaces.

This strategy allows memory usage to grow only when the sketch is genuinely saturated and, in practice, roughly logarithmically with respect to the stream volume.
Consequently, Crane adapts its footprint to the difficulty and scale of the workload, while avoiding expensive reconfiguration or retraining.
In particular, when a single machine monitors multiple streams simultaneously, Crane can automatically allocate more layers---and therefore more memory---to heavier streams, while keeping the footprint of lighter streams small.
This per-stream elasticity improves overall memory utilization and simplifies deployment, since operators no longer need to predict stream sizes or manually tune sketch capacities in advance.

\begin{algorithm}[t]
\small
\caption{Automatic Memory Expansion}
\label{alg:auto-memory-expansion}
\begin{algorithmic}[1]
\State \textbf{Data:} Current number of layers $L$;
       memories $\{M^{(i)}\in\mathbb{R}^{H\times W}\}_{i=1}^L$;
       maximum allowed layers $N_{\max}$;
       load threshold $\tau>0$.
\State \textbf{Output:} Updated number of layers $L$ and memories.
\vspace{0.3em}
\State $M^{(L)}$ is the current top-layer memory
\State $\lambda^{(L)} \leftarrow \frac{1}{HW}
       \sum_{p=1}^{H}\sum_{q=1}^{W} M^{(L)}_{pq}$
\If{$\lambda^{(L)} > \tau$ \textbf{and} $L < N_{\max}$}
  \State $L \leftarrow L + 1$
  \State $M^{(L)} \leftarrow \mathbf{0}^{H\times W}$
\EndIf
\State \Return $\{M^{(i)}\}_{i=1}^L$, $L$
\end{algorithmic}
\end{algorithm}

\subsection{Training}
\label{sec:training}

Crane is trained entirely offline on synthetic graph streams and is then directly evaluated on real workloads without any additional fine-tuning.
Our training strategy follows the meta-task view used in prior neural sketches for graph streams, where each synthetic stream together with its ground-truth queries forms a single task.
\footnote{Conceptually similar to the larval pre-training phase of Mayfly~\cite{feng2023mayfly}.}
Unlike Mayfly, we do not introduce a separate adaptation phase on real graph streams, and all parameters of the encoders, hierarchical memory, carry mechanism, and decoder are shared across tasks and datasets.

\subsubsection{Task formulation.}
Each training task $t$ simulates storing and querying a single graph stream instance.
Formally, task $t$ consists of a support set $S_t$ and a query set $Q_t$.
The support set
\[
S_t = \{x^{(t)}_1, x^{(t)}_2, \dots, x^{(t)}_{L_t}\}
\]
is a sequence of streaming edges
$x^{(t)}_\ell = (o_\ell, d_\ell, w_\ell)$,
which are fed to the \textsc{Store} operation (Algorithm~\ref{alg:crane-store-query}) in arrival order.
The query set
\[
Q_t = \{(e^{(t)}_j, y^{(t)}_j)\}_j
\]
contains a subset of edges $e^{(t)}_j = (o_j, d_j)$ together with their exact cumulative frequencies
\[
y^{(t)}_j \;=\; \sum_{\ell : (o_\ell, d_\ell) = (o_j, d_j)} w_\ell
\]
computed from the full support stream $S_t$.
During training, we reset the hierarchical memories $\{M^{(i)}\}$ at the beginning of each task, run \textsc{Store} on $S_t$ once, and then apply \textsc{Query} to all edges in $Q_t$ to obtain predictions $\hat{y}^{(t)}_j$.

\subsubsection{Synthetic graph-stream generation.}
To endow Crane with general summarization ability instead of memorizing specific datasets, all tasks are generated from synthetic graph streams, following the practice of Mayfly's larval tasks~\cite{feng2023mayfly}.
We use a unified numerical ID space for nodes (e.g., integers mapped to fixed-length binary codes), so the encoders in \S\ref{sec:hierarchical-learnable-memory} are trained in a domain-agnostic manner and can be reused on real streams without retraining.

For each task $t$, we construct a synthetic stream $S_t$ in three steps.
First, we sample a stream length $L_t$ from a range $[1, \Gamma]$ and generate endpoints $(o_\ell, d_\ell)$ by drawing node IDs independently from the global ID space.
This produces diverse graph structures and collision patterns within the sketch.
Second, we sample a Zipf distribution $\mathrm{Zipf}(\alpha_t)$ from a pool of exponents $\alpha_t \in [\alpha_{\min}, \alpha_{\max}]$, reflecting the skewed edge-weight distributions commonly observed in real graphs.
We then generate a normalized weight profile $\{\tilde{w}_\ell\}_{\ell=1}^{L_t}$ by i.i.d.\ sampling from $\mathrm{Zipf}(\alpha_t)$ and rescaling so that $\sum_\ell \tilde{w}_\ell = 1$.
Finally, we draw a total weight $W_t$ from a log-uniform range and set $w_\ell = W_t \tilde{w}_\ell$ for each edge.
The query set $Q_t$ consists of all distinct edges in $S_t$, together with their exact cumulative weights $\{y^{(t)}_j\}$ computed by aggregation.

This construction exposes Crane to a wide spectrum of stream lengths, skewness levels, and frequency scales during training, and naturally exercises the carry mechanism and automatic expansion mechanisms described in \S~\ref{sec:hierarchical-carry-mechanism} and \S~\ref{sec:auto-memory-expansion}.


\subsubsection{Loss function.}
Edge frequencies in graph streams typically follow a heavy-tailed distribution, spanning several orders of magnitude.
To ensure robust optimization across this wide dynamic range, we adopt the Mean Absolute Error (MAE) as our training objective.
Unlike squared error metrics that can be overly sensitive to outliers (i.e., extremely frequent edges), the $L_1$ loss provides a stable gradient signal that encourages accurate estimation for both heavy hitters and tail items.

Formally, given a task $t$ and its query set $Q_t$, the loss is defined as:
\begin{equation}
\label{eq:task-loss}
\mathcal{L}_t
\;=\;
\frac{1}{|Q_t|}
\sum_{(e^{(t)}_j, y^{(t)}_j) \in Q_t}
\left| y^{(t)}_j - \hat{y}^{(t)}_j \right|,
\end{equation}
where $\hat{y}^{(t)}_j$ denotes the frequency predicted by the decoder, and $y^{(t)}_j$ is the ground-truth cumulative weight.
Minimizing this objective directly aligns the model's predictions with the true counts, ensuring that the learnable components—encoders, memory, and decoder—converge to a state that accurately preserves the stream statistics.


\subsubsection{Optimization.}
We train Crane by sampling mini-batches of tasks $\{t\}$ on the fly and minimizing the average task loss
\(
\mathcal{L} = \frac{1}{B} \sum_{t=1}^B \mathcal{L}_t
\)
with AdamW~\cite{loshchilov2017decoupled}.
Within each task, edges are processed in mini-batches of streaming updates so that the mini-batch conservative carry mechanism in Algorithm~\ref{alg:crane-carry} can be applied efficiently.
All learnable parameters, including the encoder networks for each layer, the linear decoder, and the shared promotion and expansion thresholds, are optimized jointly.
Training uses only synthetic tasks; at evaluation time, we fix all parameters and run Crane in a purely streaming fashion on real graph streams, with no gradient updates and no dataset-specific tuning.

%% file: MainText/4-Maths.tex
\vspace{-0.1in}
\section{Mathematical Analysis}

In this section, we provide a theoretical analysis of Crane's performance characteristics. We establish four key properties: logarithmic efficiency in both space and time, exponential decay of collision probability, and minimized error variance.

\subsection{Preliminaries and Assumptions}
Let $\mathcal{S}$ be a graph stream of edge updates. Let $f(e)$ denote the true frequency of an edge $e$. Crane maintains a hierarchy of $N$ memory layers $\{M^{(1)}, \dots, M^{(N)}\}$ with a promotion threshold $\theta > 1$.

\begin{itemize}[leftmargin=1em]
    \item \textbf{Assumption 1 (Independent Hashing).} The neural encoders $f_l(\cdot)$ act as pairwise independent hash functions. For distinct edges $e_i, e_j$, the collision probability at layer $l$ is bounded by a small constant $p \in (0, 1)$, independent across layers.
    \item \textbf{Assumption 2 (Noise Distribution).} The value stored in a memory cell is $X_l = \text{signal}_l(e) + \epsilon_l$, where $\epsilon_l \geq 0$ represents non-negative additive collision noise.
\end{itemize}

\subsection{Complexity Analysis}
\label{subsec:complexity}

We first address the computational efficiency of Crane. Unlike traditional sketches that often require linear memory expansion to handle heavy traffic, Crane utilizes a hierarchical carry-like mechanism to achieve logarithmic scaling in both space and time.

\begin{lemma}[Logarithmic Space Efficiency]
\label{lemma:space}
Given a promotion threshold $\theta$, the number of layers $L$ required to accurately represent an edge with maximum frequency $F_{max}$ grows logarithmically, specifically $L \geq \log_\theta(F_{max} + 1)$. Consequently, the total space complexity relative to the stream volume is logarithmic.
\end{lemma}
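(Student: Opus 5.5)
The plan is to treat the hierarchical carry mechanism as a positional numeral system in base $\theta$ and argue as for digit counts. We take an idealized regime consistent with Assumptions 1 and 2. Each layer's local estimate $Q(M^{(l)},A^{(l)})$ is an upper bound on the mass of $e$ stored there. The carry rule of \S\ref{sec:hierarchical-carry-mechanism} then removes $\theta T^{(l)}_e$ units from layer $l$ and deposits $T^{(l)}_e$ units in layer $l+1$. So one unit in layer $l+1$ stands for $\theta$ units of layer $l$, and by induction one unit in layer $l$ stands for $\theta^{l-1}$ units of the original stream. The first step is to state this invariant precisely. The true frequency decomposes as $f(e)=\sum_{l=1}^{N} c_l(e)\,\theta^{l-1}$, where $c_l(e)$ is the residual signal of $e$ in layer $l$. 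The invariant holds because each carry preserves $\sum_l c_l\theta^{l-1}$: it subtracts $\theta T$ at weight $\theta^{l-1}$ and adds $T$ at weight $\theta^{l}$.

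The second step bounds the digits. After a sequential carry completes (Algorithm~\ref{alg:crane-carry}), every layer below the top satisfies $c_l(e)<\theta$, since otherwise $T^{(l)}_e\ge 1$ and another carry would fire. Noise only inflates the estimate by Assumption 2, so it can trigger extra carries but never suppresses a needed one. Hence $0\le c_l(e)\le \theta-1$ in the integer-count idealization, as for base-$\theta$ digits. The largest value representable with $L$ layers, all with bounded residuals, is then $\sum_{l=1}^{L}(\theta-1)\theta^{l-1}=\theta^{L}-1$. To hold an edge with frequency $F_{max}$ without overflowing the top layer, we need $\theta^{L}-1\ge F_{max}$, which gives $L\ge \log_\theta(F_{max}+1)$. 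Conversely, $L=\lceil\log_\theta(F_{max}+1)\rceil$ layers suffice, so the required depth is $\Theta(\log_\theta F_{max})$.

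The final step converts this into space complexity. Each layer costs a fixed $H\times W$ matrix plus a constant-size encoder. The automatic expansion rule of \S\ref{sec:auto-memory-expansion} adds a layer only when the top layer is saturated, which in the idealized model happens only once mass of order $\theta^{L}$ has been pushed up. Total memory is therefore $O(HW\cdot\log_\theta F_{max})$. Since $F_{max}$ is at most the total stream weight $|\mathcal{S}|$, this is $O(\log|\mathcal{S}|)$ relative to stream volume.

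I expect the main obstacle to be justifying the digit bound $c_l<\theta$ for the actual mechanism rather than the idealization. The estimate $Q$ mixes collision noise with signal, and the mini-batch conservative carry promotes at most once per batch, so residuals can temporarily exceed $\theta$. I would handle this by proving the bound only at quiescence, that is, after repeated sequential carries. For the mini-batch case I would simply note that under-promotion can only raise the required $L$, which keeps the inequality $L\ge\log_\theta(F_{max}+1)$ a valid lower bound. Noise-induced over-promotion does not affect this bound either, because it is a statement about how many layers are necessary.
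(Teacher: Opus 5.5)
Your main argument is the paper's proof. You read layer $l$ as the base-$\theta$ digit of weight $\theta^{l-1}$, bound the capacity of $L$ layers by $\sum_{l=1}^{L}(\theta-1)\theta^{l-1}=\theta^{L}-1$, solve $\theta^{L}-1\ge F_{max}$, and use that memory is linear in $L$. Your carry invariant and the explicit converse $L=\lceil\log_\theta(F_{max}+1)\rceil$ make precise what the paper leaves implicit. The paper only writes $\sum_l v_l\theta^{l-1}$ ``approximately'' and passes from a necessary inequality straight to $\mathcal{O}(\log_\theta F_{max})$ growth. Within the sequential carry your proof is fine.

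The step that fails is your extension to the mini-batch carry, where you have the direction reversed. For nonnegative memory, $A^{(i)}_B\ge A^{(i)}_e$ entrywise gives $Q(M^{(i)},A^{(i)}_B)\le Q(M^{(i)},A^{(i)}_e)$. So batching carries no more than the per-edge rule would on the same memory, and residuals can stay at or above $\theta$. Residuals above $\theta-1$ enlarge the capacity of $L$ layers beyond $\theta^{L}-1$. Under-promotion can therefore only \emph{lower} the depth needed to keep the top layer from overflowing. For example, with $\theta=2$ and $F_{max}=7$, residuals $5$ and $1$ in two layers represent $7$ with top digit $\theta-1$, although $\log_2 8=3$. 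Thus $L\ge\log_\theta(F_{max}+1)$ is not a necessary condition for the batched mechanism. Adding layers cannot repair under-promotion anyway, since the carry decision at a lower layer does not depend on how many layers sit above it. To keep the inequality, you must either define ``accurately represent'' as a representation with all digits in $[0,\theta-1]$, or restrict to the sequential carry, as the paper implicitly does. The first option makes the bound pure arithmetic and independent of the mechanism.

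Your noise remark reaches the right conclusion, but for a different reason than the one you give. After each carry, $c_l(e)\le Q<\theta$ still holds. Noise-triggered carries can make some digits negative, so your claim $c_l(e)\ge 0$ is false there. Negative digits only decrease $\sum_l c_l\theta^{l-1}$, so the capacity bound survives.
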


\begin{proof}
See Appendix~\ref{proof:space}.
\end{proof}

\begin{theorem}[Logarithmic Amortized Time Complexity]
\label{thm:time}
Let $\mathcal{S}$ be a stream of $N$ edges with total accumulated weight $W_{\text{total}}$. The total time complexity to process $\mathcal{S}$ is $\mathcal{O}(N \log_\theta W_{\text{total}})$, and the amortized time complexity per insertion is $\mathcal{O}(\log_\theta W_{\text{total}})$.
\end{theorem}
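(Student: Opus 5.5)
Since each insertion in Algorithm~\ref{alg:crane-store-query} touches the layers through the Store/carry loop, the plan is to bound the number of layers any single insertion can reach and then sum over the stream.

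First I will note that one insertion does $\mathcal{O}(HW)$ work per visited layer. This covers the encoder evaluations $f_i^{(o)}, f_i^{(d)}$, the outer product forming $A^{(i)}$, the accumulation into $M^{(i)}$, the local query $Q(M^{(i)},A^{(i)})$, and at most one carry update of $M^{(i)}$ and $M^{(i+1)}$. Since $H$, $W$ and the encoder sizes are fixed model constants, this is $\mathcal{O}(1)$ per layer. The expansion check of Algorithm~\ref{alg:auto-memory-expansion} only inspects the top layer, so it also costs $\mathcal{O}(1)$ per insertion. The total cost is therefore $\mathcal{O}(\sum_{e}\,\text{depth}(e))$, where $\text{depth}(e)$ is the number of layers visited while inserting $e$.

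Next I will bound the number of layers. A carry from layer $i$ to $i+1$ removes $\theta T$ units of mass, measured in basis-matrix units, and deposits only $T$ units. By induction on $i$, every unit of mass in layer $i$ therefore corresponds to $\theta^{i-1}$ units of original weight. The total mass ever deposited into layer $i$ is thus at most $W_{\text{total}}/\theta^{i-1}$.

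A layer only becomes active, through a carry or through the expansion rule with threshold $\tau$, once it holds some minimal positive mass. Concretely, a carry requires $T\ge 1$, which places at least one full unit of the nonnegative basis pattern into layer $i+1$. So if layer $i$ is ever used, then $\theta^{i-1}\le W_{\text{total}}$. This gives $L\le 1+\log_\theta W_{\text{total}}$, which matches the scale in Lemma~\ref{lemma:space}, now applied to the total weight rather than $F_{max}$. Each insertion visits at most $L$ layers, so $\text{depth}(e)=\mathcal{O}(\log_\theta W_{\text{total}})$.

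Summing over the $N$ edges gives $\mathcal{O}(N\log_\theta W_{\text{total}})$ total time, and dividing by $N$ gives the amortized per-insertion bound.

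The main obstacle is making the mass-conservation argument rigorous. The carry factor is computed from the noisy estimate $Q$, which by Assumption~2 overestimates, so a carry may move mass contributed by colliding edges. I will handle this by tracking total mass per layer rather than per-edge mass. Every carry, regardless of which edge triggers it, conserves the weighted quantity $\sum_i \theta^{i-1}\,\mathrm{mass}(M^{(i)})$. This potential never exceeds $W_{\text{total}}$ (times the constant $\sum A$-scaling), and this suffices for the layer bound.

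A sharper amortized bound, with most edges stopping at layer 1, could be obtained by charging each carry to the $\theta$ units of weight it consumes, in the style of a binary-counter argument. That refinement is not needed for the stated bound.
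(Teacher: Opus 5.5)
Your argument is correct, but it takes a different and more careful route than the paper. The paper writes the per-insertion cost as $\mathcal{O}(L_t)+T_{\text{carry}}$ and asserts that carry propagation is a convergent geometric series of expected cost $\mathcal{O}(1)$. It then gets $L_t=\mathcal{O}(\log_\theta W_t)$ by appeal to the base-$\theta$ positional analogy, in effect Lemma~\ref{lemma:space} with $W_t$ in place of $F_{max}$; neither step is derived. You instead bound each insertion deterministically by the number of existing layers, so you need no expected-carry claim, and you actually prove the depth bound. A carry removes $\theta T\|A^{(i)}\|_1$ from layer $i$ and adds $T\|A^{(i)}\|_1$ to layer $i+1$, so $\Phi=\sum_i\theta^{i-1}\|M^{(i)}\|_1$ is invariant under carries and grows only through bottom-layer writes. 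Because this accounting is aggregate, it survives carries triggered by collision-inflated estimates, which the paper's per-edge ``digit'' picture ignores. In return, the paper's decomposition hints at a sharper amortized picture (most insertions stop at layer~$1$) that your worst-case bound does not capture.

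When writing it up, make the following explicit.
(i) Memories stay nonnegative, because $\theta T\le Q$ gives $M-\theta TA\ge M-QA\ge 0$; you need this for $\Phi\ge\theta^{i-1}\|M^{(i)}\|_1$.
(ii) One ``unit'' of basis pattern is not the same mass for every edge. So $L\le 1+\log_\theta W_{\text{total}}$ should carry an additive $\log_\theta(a_{\max}/a_{\min})$, with $a_{\max},a_{\min}$ the extreme values of $\|A^{(i)}_e\|_1$. These are finite and positive because the ID space is finite and $A\ge\varepsilon$.
(iii) A layer added by the expansion rule is empty when created. Creating it requires the top layer's total mass to exceed $\tau HW$, so this costs at most one extra layer.
(iv) The bound $\Phi\le a_{\max}W_{\text{total}}$ needs raw weight to enter only at layer~$1$. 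That is the bottom-up reading of Section~\ref{sec:hierarchical-carry-mechanism}, not the literal Algorithm~\ref{alg:crane-store-query}, which writes $w\,A^{(i)}$ into every $M^{(i)}$.
(v) Carry-invariance needs the deposit to use the same $A^{(i)}$ that is removed, as in the paper's update equations.
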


\begin{proof}
See Appendix~\ref{proof:time}.
\end{proof}

\subsection{Error Bound Analysis}

We now analyze the error bounds, demonstrating how Crane's hierarchical structure and learnable components minimize collision interference.

\begin{theorem}[Exponential Decay of Collision Probability]
\label{thm:collision}
Under Assumption 1, the probability that a heavy hitter edge $e$ suffers from consistent collision interference across all active layers decreases exponentially with the number of layers. Specifically, the joint collision probability across $k$ layers is bounded by $p^k$.
\end{theorem}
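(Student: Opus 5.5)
The plan is to formalize ``consistent collision interference across all active layers'' as an intersection of per-layer collision events and then use the independence across layers granted by Assumption~1. For a fixed heavy hitter $e$ and a layer $l$, define $C_l(e)$ as the event that $e$'s estimate $q_l(e)=Q(M^{(l)},A^{(l)})$ at layer $l$ is contaminated. By Assumption~2 this means the collision noise $\epsilon_l$ in every cell that determines the minimum is strictly positive. Because $Q$ takes a minimum over positions, interference at layer $l$ requires that some other edge collides with $e$ under the layer-$l$ encoder $f_l$. By Assumption~1, and with the per-layer bound $p$ read as the probability that $e$ collides at layer $l$, we get $\Pr[C_l(e)]\le p$. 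Consistent interference across $k$ active layers $l_1<\dots<l_k$ is then the event $\bigcap_{j=1}^k C_{l_j}(e)$.

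The key step is the factorization. The layer-specific encoders are distinct networks, and Assumption~1 states that collision events are independent across layers. Hence
\[
\Pr\Bigl[\textstyle\bigcap_{j=1}^k C_{l_j}(e)\Bigr]=\prod_{j=1}^k \Pr[C_{l_j}(e)]\le p^k ,
\]
and since $p\in(0,1)$ this decays exponentially in $k$. I would also note why more layers become active for a heavy hitter: by Lemma~\ref{lemma:space}, an edge of frequency $F$ propagates through roughly $\log_\theta(F+1)$ layers via the carry mechanism. So $k$ grows with the edge's mass, and heavier items get an increasingly small chance of being uniformly contaminated. Finally, I would connect the bound to the decoder. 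On the complement event, at least one layer gives an uncontaminated estimate $q_l(e)=\mathrm{signal}_l(e)$, because noise is non-negative and the minimum attains it. The linear decoder can then rely on that clean component.

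The main obstacle is justifying independence in the presence of the carry mechanism. Mass promoted from layer $l$ to $l+1$ uses $A^{(l)}$ together with the aggregate $T^{(l)}$, so whether a collider reaches layer $l+1$ depends on what happened at layer $l$. The contents of the layers are therefore not independent, even if the encoders are. To handle this, I would condition on the set of edges present at each layer and apply Assumption~1 only to the encoder randomness of $f_{l+1}$, which is independent of $f_1,\dots,f_l$. This gives $\Pr[C_{l+1}\mid C_1,\dots,C_l]\le p$ for any fixed colliding set, and chaining these conditional bounds yields the $p^k$ product without needing unconditional independence. If the number of potential colliders at a layer matters, $p$ should be read as a union bound over that layer's occupants. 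Since the carry mechanism only thins out upper layers, that count is non-increasing in $l$, so a single constant $p$ set by the bottom layer suffices.
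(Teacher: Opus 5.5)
This is essentially the paper's proof: it also defines per-layer collision events $C_l$, takes $P(C_l)\le p$ from Assumption~1, and multiplies across the $k$ layers to get $p^k$. Your conditional chaining, which uses only that $f_{l+1}$ is independent of $f_1,\dots,f_l$, is a sound tightening of the paper's bare assertion $P(\bigcap_l C_l)=\prod_l P(C_l)$, which ignores how carries couple the layers' contents. One caution on your union-bound aside: fixing the effective $p$ by the bottom layer's occupant count (essentially every edge) would typically push it above $1$ and make the bound vacuous, so keep the per-layer counts instead.
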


\begin{proof}
See Appendix~\ref{proof:collision}.
\end{proof}

\begin{theorem}[Residual Error Bound]
\label{thm:residual}
Let $W_{res} = \sum_{e} (f(e) \pmod \theta)$ be the residual weight retained in the bottom layer, and $w$ be the width of the sketch. Assuming the convergence condition $p \cdot \theta < 1$, the expected estimation error is dominated by the bottom layer:
\begin{equation}
    E[\text{Error}_{\text{Crane}}] \approx \mathcal{O}\left(\frac{W_{res}}{w}\right)
\end{equation}
Since $W_{res} \ll \|\mathbf{f}\|_1$, Crane achieves a significantly tighter error bound than standard Count-Min Sketches.
\end{theorem}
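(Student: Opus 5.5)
Under the idealized model of this section, the plan is to decompose the expected error layer by layer and show that the upper layers form a geometric series dominated by the bottom layer. Three simplifications are adopted up front.

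\textbf{(i) Exact carries.} The carry operates as an exact positional (base-$\theta$) decomposition. After the stream is processed, an edge $e$ contributes $f(e) \bmod \theta$ to layer $1$, $\lfloor f(e)/\theta\rfloor \bmod \theta$ to layer $2$, and in general its $l$-th base-$\theta$ digit to layer $l$. This is the idealization of the carry rule $M^{(i)} \leftarrow M^{(i)}-\theta T A$, $M^{(i+1)} \leftarrow M^{(i+1)} + T A$, and is consistent with Lemma~\ref{lemma:space}.

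\textbf{(ii) Final estimate.} The estimate is $\hat f(e)=\sum_l \theta^{l-1} q_l(e)$, which is one admissible choice of the linear decoder $\mathbf{w}$.

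\textbf{(iii) Noise model.} By Assumption~2, $q_l(e) = d_l(e) + \epsilon_l$ with $\epsilon_l \ge 0$. The error is therefore $\sum_l \theta^{l-1}\epsilon_l$, and linearity of expectation reduces the problem to bounding $E[\epsilon_l]$ for each layer.

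\textbf{Step 1: bottom layer.} This is the standard Count-Min argument. Let $R_1=\sum_{e'}(f(e') \bmod \theta) = W_{res}$ be the total mass resident in layer $1$. By Assumption~1, with the min-ratio query viewed as a cell lookup in a width-$w$ array, each other edge collides with $e$ with probability $O(1/w)$. Hence $E[\epsilon_1] \le W_{res}/w$.

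\textbf{Step 2: upper layers.} Let $R_l$ be the resident mass of layer $l$. Since every digit is less than $\theta$, and only edges with $f(e') \ge \theta^{l-1}$ reach layer $l$, we have $R_l \le \theta\cdot|\{e' : f(e') \ge \theta^{l-1}\}|$. The per-layer noise is then bounded by $E[\epsilon_l] \le p\, R_l$, using the collision probability $p$ of Assumption~1 in place of $1/w$. The scaled contribution $\theta^{l-1}E[\epsilon_l]$ is then related to $W_{res}$ through the independence of layers stated in Theorem~\ref{thm:collision}: a collision that still biases the estimate at layer $l$ must survive $l-1$ independent layer hashes, which contributes a factor $p^{l-1}$. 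The scaled noise at layer $l$ is then bounded by $(p\theta)^{l-1}$ times the bottom-layer term.

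\textbf{Step 3: summation.} Summing over layers gives
\[
E[\text{Error}] \le \frac{W_{res}}{w}\sum_{l\ge 1}(p\theta)^{l-1} = \frac{W_{res}}{w}\cdot\frac{1}{1-p\theta}.
\]
The condition $p\theta<1$ makes the series converge, which yields the stated $\mathcal{O}(W_{res}/w)$. The comparison with Count-Min then follows because $f \bmod \theta \le f$ pointwise, with strict and large savings for heavy hitters. This gives $W_{res} \le \|\mathbf f\|_1$, typically $\ll$ under skew.

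\textbf{Main obstacle.} Step 2 is the weak point. In general the upper-layer resident mass is not bounded by $W_{res}$; it is bounded by the count of heavy edges times $\theta$. Turning the $\theta^{l-1}$ amplification into a decaying factor requires a heuristic: either that upper-layer noise only matters when collisions persist across layers (using Theorem~\ref{thm:collision}), or an explicit assumption that $\theta\cdot|\{f\ge\theta^{l-1}\}|$ is comparable to $W_{res}$ under Zipf skew. I would try the former first and state it as an approximation, matching the ``$\approx$'' in the theorem.
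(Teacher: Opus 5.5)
Your Step 1 is the paper's bottom-layer computation: the paper writes $\frac{1}{w}(W_{res}-f(e)^{(1)})\approx W_{res}/w$. Your ingredients for the upper layers are also the paper's: collision probability $p^l$ at layer $l$ via Theorem~\ref{thm:collision}, and the positional amplification $\theta^{l-1}$.

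The genuine gap is the last move of Step 2, where you assert that the scaled noise at layer $l$ is at most $(p\theta)^{l-1}\cdot W_{res}/w$. That requires $p\,R_l\le W_{res}/w$ up to constants, i.e.\ upper-layer resident mass controlled by bottom-layer residual mass. This is false even in your own idealized model. If every edge has $f(e')=\theta$, then $W_{res}=0$ and your bound $\frac{W_{res}}{w}\cdot\frac{1}{1-p\theta}$ claims zero expected error. Yet all the mass sits in layer~2, and each collision there adds $1$ to $q_2$, hence error $\theta$ after scaling. So the layer-2 term is strictly positive whenever layer-2 collisions have positive probability.

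Neither heuristic under your ``Main obstacle'' repairs this. The cross-layer survival argument only supplies the factor $p^{l-1}$; it does not turn $R_l$ into $W_{res}$. The second option is an extra hypothesis that is not in the statement.

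The paper avoids this by never making the upper-layer term proportional to $W_{res}$. It keeps the error additive and bounds the upper layers separately:
\[
E[\text{Noise}^{(>1)}] \propto \sum_{l=2}^{L} p^l\,\theta^{l-1}\,W_{heavy} = p\,W_{heavy}\sum_{l=2}^{L}(p\theta)^{l-1}.
\]
This gives $E[\text{Error}_{\text{Crane}}]\approx \frac{W_{res}}{w}+\epsilon_{high\_layers}$. The condition $p\theta<1$ is used only to make this tail a convergent, rapidly decaying series. ``Dominated by the bottom layer'' is then the approximate reading signalled by the $\approx$ in the theorem, not a proved inequality.

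Your bound $R_l\le\theta\cdot|\{e':f(e')\ge\theta^{l-1}\}|$ is a good stand-in for $W_{heavy}$ (take $W_{heavy}\ge\max_l R_l$). If you drop the attempt to express it through $W_{res}$ and state the result as $\frac{W_{res}}{w}+p\,W_{heavy}\,\frac{p\theta}{1-p\theta}$, your argument coincides with the paper's.
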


\begin{proof}
See Appendix~\ref{proof:residual}.
\end{proof}

\subsection{Optimization Properties}

\paragraph{Learnable Orthogonality.}
For the set of heavy hitters, the gradient descent update rule drives the encoder parameters $\phi$ to minimize the inner product of basis matrices $\langle A_u, A_v \rangle$. Consequently, the effective collision probability for learned encoders satisfies $p_{learn} < p_{random}$ (See Appendix~\ref{proof:learnable} for derivation).

\begin{theorem}[Variance Minimization via Linear Decoding]
\label{thm:variance}
The learnable linear decoder $\hat{y} = \sum w_l q_l + b$ yields an estimator with variance no greater than that of a fixed-weight decoder. It effectively acts as an Inverse-Variance Weighting mechanism, assigning lower weights to noisy lower layers.
\end{theorem}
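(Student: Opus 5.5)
The proof treats the decoder as a linear estimator of the true frequency $f(e)$ and compares mean squared errors. The plan has four steps: set up a model of the per-layer estimates, argue that the trained decoder minimizes the loss over a class containing the fixed decoder, derive the optimal weights in closed form, and show they are inverse-variance weights.

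\textbf{Setting up the model.} Collect the per-layer estimates into $\mathbf{q}(e)=[q_1(e),\dots,q_N(e)]^\top$. By Assumption~2 and the carry structure of \S\ref{sec:hierarchical-carry-mechanism}, each estimate decomposes as
\[
q_l(e)=a_l f(e)+\epsilon_l ,
\]
where $a_l$ is a known scale, for instance $\theta^{-(l-1)}$ from the positional carry. The noise term $\epsilon_l\ge 0$ has mean $\mu_l$ and variance $\sigma_l^2$.

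Theorem~\ref{thm:collision} and Assumption~1 give independence of collisions across layers. From this I will take $\mathrm{Cov}(\epsilon_l,\epsilon_m)=0$ for $l\neq m$, so the noise covariance $\Sigma$ is diagonal, $\Sigma=\mathrm{diag}(\sigma_1^2,\dots,\sigma_N^2)$. If this cannot be justified, I will instead keep a general covariance matrix $\Sigma$ and state the weighting result in its GLS form.

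\textbf{Optimality over a class containing the fixed decoder.} The fixed-weight decoder is one specific pair $(\mathbf{w}_0,b_0)$, for example the positional weights $w_l=\theta^{l-1}$ with $b_0=0$. The learnable decoder is obtained by minimizing the empirical loss over all pairs $(\mathbf{w},b)$.

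For the variance argument I will use the population squared loss, noting that the paper trains with MAE. Since $(\mathbf{w}_0,b_0)$ lies in the feasible set, the minimizer $(\mathbf{w}^*,b^*)$ has
\[
\mathrm{MSE}(\mathbf{w}^*,b^*)\le \mathrm{MSE}(\mathbf{w}_0,b_0).
\]
The bias $b$ absorbs the mean noise $\sum_l w_l\mu_l$. I will also impose the unbiasedness constraint $\sum_l w_l a_l=1$, which the fixed decoder satisfies. Under that constraint both estimators are unbiased, so MSE equals variance and the inequality becomes $\mathrm{Var}(\hat y^*)\le\mathrm{Var}(\hat y_0)$.

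\textbf{Deriving inverse-variance weights.} Next I solve the constrained problem
\[
\min_{\mathbf{w}}\ \mathbf{w}^\top\Sigma\mathbf{w}\quad\text{subject to}\quad \mathbf{w}^\top\mathbf{a}=1
\]
with a Lagrange multiplier. The solution is
\[
\mathbf{w}^*=\frac{\Sigma^{-1}\mathbf{a}}{\mathbf{a}^\top\Sigma^{-1}\mathbf{a}},
\]
which for diagonal $\Sigma$ gives $w_l^*\propto a_l/\sigma_l^2$. This is inverse-variance weighting applied to the rescaled estimates $q_l/a_l$.

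Finally, the bottom layer retains the residual mass $W_{res}$ and most of the collision noise (Theorem~\ref{thm:residual}). So $\sigma_1^2/a_1^2$ is the largest, and the lower layers receive smaller effective weight, as the statement claims.

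\textbf{Main obstacle.} The hardest step is honestly reconciling three gaps:
\begin{itemize}[leftmargin=1em]
\item the MAE training objective versus the squared-loss analysis;
\item the empirical minimizer versus the population optimum;
\item the independence of the layer noises, which the carry mechanism actually couples because carried mass moves collisions upward.
\end{itemize}
I would state the theorem under explicit idealizations: squared-loss population optimum, diagonal (or known) $\Sigma$, and exact optimization. I would then remark that with MAE the same containment argument still gives a risk no larger than the fixed decoder's, even though the weights are then no longer exactly inverse-variance.
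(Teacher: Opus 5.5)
Your route is the same as the paper's, made rigorous. The appendix proof only posits layer estimates $Q_l$ with variances $\sigma_l^2$, writes $\min_{\mathbf{w}}\mathbb{E}[(y-\sum_l w_lQ_l)^2]=\min(\mathrm{Bias}^2+\mathrm{Variance})$, and declares this to be inverse-variance weighting with $\mathrm{Var}(\hat y_{\mathrm{Crane}})\le\mathrm{Var}(\hat y_{\mathrm{Fixed}})$. Your containment argument and your unbiasedness restriction are exactly what that sketch is missing. Minimizing MSE controls MSE, not variance, and against an arbitrary fixed decoder the variance claim is false, since the all-zero decoder has zero variance. Within the model $q_l=a_lf+\epsilon_l$ with $\mathbf{w}_0^\top\mathbf{a}=1$, your Lagrange/GLS step is Gauss--Markov and is correct.

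\textbf{The gap.} The step that fails is deriving that model, and a fixed decoder that meets the constraint, from the carry structure. Your own examples already clash: $a_l=\theta^{-(l-1)}$ and $w_{0,l}=\theta^{l-1}$ give $\mathbf{w}_0^\top\mathbf{a}=N\neq1$. The reason is that the carry does not store rescaled copies of $f$ in the layers. It stores base-$\theta$ digits, $q_l=d_l(f)+\epsilon_l$ with $f=\sum_l\theta^{l-1}d_l(f)$ (cf.\ the proof of Lemma~\ref{lemma:space}); that identity is precisely why positional weights reconstruct $f$. In this digit model, $\mathbb{E}[\hat y\mid f]-f=\sum_l(w_l-\theta^{l-1})d_l(f)+\sum_l w_l\mu_l+b$ is affine in the digit vector $\mathbf{d}(f)$. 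The digit vectors affinely span $\mathbb{R}^N$ (e.g.\ $\mathbf{d}(\theta^{k-1})=\mathbf{e}_k$ and $\mathbf{d}(1+\theta)=\mathbf{e}_1+\mathbf{e}_2$). Hence unbiasedness for all $f$ forces $w_l=\theta^{l-1}$ and $b=-\sum_l\theta^{l-1}\mu_l$. Your constrained feasible set is therefore a single point, the Lagrange problem has nothing to optimize, and no inverse-variance weighting can occur among unbiased linear decoders.

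\textbf{How to repair it.} One option is to state ``every layer measures the same quantity up to a known scale $a_l$'' as an explicit hypothesis, rather than attributing it to the carry. This is exactly what the paper's proof silently assumes when it writes $w_l\propto1/\sigma_l^2$. You would then also need a fixed decoder that really satisfies $\mathbf{w}_0^\top\mathbf{a}=1$, e.g.\ the plain average of the $q_l/a_l$.

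The other option, faithful to the digit structure, is to drop unbiasedness. Take $f$ random over queried edges, noise uncorrelated with the edge's own digits, $C_d$ the covariance of $\mathbf{d}(f)$, positive definite $\Sigma$, and $\mathbf{t}=(\theta^{l-1})_{l=1}^N$. The population squared-loss optimum is then $\mathbf{w}=(C_d+\Sigma)^{-1}C_d\mathbf{t}$. In whitened coordinates, $\Sigma^{1/2}\mathbf{w}=(\tilde C+I)^{-1}\tilde C\,\Sigma^{1/2}\mathbf{t}$ with $\tilde C=\Sigma^{-1/2}C_d\Sigma^{-1/2}$, and this map is a contraction. Hence $\mathbf{w}^\top\Sigma\mathbf{w}\le\mathbf{t}^\top\Sigma\mathbf{t}$. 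For diagonal $C_d=\mathrm{diag}(c_l)$ and $\Sigma$, this reads $w_l=\theta^{l-1}c_l/(c_l+\sigma_l^2)$, which shrinks most the layers with the largest noise-to-digit ratio $\sigma_l^2/c_l$. This recovers both ``no more variance than the positional decoder'' and ``down-weights noisy layers''. It does so as a biased shrinkage, however, not as inverse-variance weighting.
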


\begin{proof}
See Appendix~\ref{proof:variance}.
\end{proof}

\subsection{Comparative Analysis: Hierarchical vs. Flat Structures}
\label{subsec:comparison}

Finally, we provide a theoretical comparison between Crane and Mayfly \cite{feng2023mayfly}), which map all updates into a single shared memory space $M_{flat}$. We demonstrate that Crane's hierarchical carry mechanism provides a structural guarantee for noise suppression that flat architectures lack.

\begin{sloppypar}
\begin{theorem}[Interference Isolation]
\label{thm:isolation}
Let $\mathcal{S}$ be a graph stream obeying a heavy-tailed distribution. Let $\text{Var}(\text{Noise}_{\text{Flat}})$ and $\text{Var}(\text{Noise}_{\text{Crane}})$ denote the variance of the collision noise affecting a target low-frequency edge $e$ in a Flat Neural Sketch and Crane, respectively, under identical width $w$.
Crane suppresses the collision noise by a factor proportional to the ratio of the residual weight to the total weight:
\begin{equation}
    \frac{\mathbb{E}[\text{Error}_{\text{Crane}}]}{\mathbb{E}[\text{Error}_{\text{Flat}}]} \approx \frac{W_{res}}{W_{total}} \ll 1
\end{equation}
where $W_{total} = \|\mathbf{f}\|_1$ and $W_{res} = \sum (f(e) \pmod \theta)$.
\end{theorem}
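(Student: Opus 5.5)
We compare the two architectures under the same model: each has width $w$ per layer, and collision noise obeys Assumptions 1 and 2. The plan is to write the noise on the target low-frequency edge $e$ as a sum of colliding masses in each architecture, and then take the ratio.

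\textbf{Flat sketch.} In the flat sketch, every update of every edge lands in the single memory $M_{flat}$. The noise on $e$ is therefore $\epsilon_{\text{Flat}}=\sum_{e'\neq e} f(e')\,\mathbf{1}[e' \text{ collides with } e]$. Under Assumption 1 each indicator has probability at most $p\approx 1/w$, so linearity of expectation gives
\[
\mathbb{E}[\text{Error}_{\text{Flat}}]\approx \frac{W_{total}-f(e)}{w}\approx \frac{W_{total}}{w}.
\]
The last step uses that $e$ is low-frequency, so $f(e)\ll W_{total}$. A matching lower bound of the same order is needed for the ratio. I would obtain it by treating the collision indicators as having probability exactly of order $1/w$ (uniform hashing), rather than merely at most $p$. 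Pairwise independence then gives the variance as $\sum_{e'} f(e')^2 p(1-p)$, which dominates the flat noise.

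\textbf{Crane.} The key structural fact comes from the sequential carry in Algorithm~\ref{alg:crane-carry}. Whenever an edge's local estimate reaches $\theta$, mass $\theta T_e$ is removed from layer $i$. Ignoring estimation error, the steady-state content of each edge in layer 1 is therefore $f(e')\bmod\theta$, its lowest "digit". Hence the bottom-layer noise on $e$ is
\[
\epsilon_{\text{Crane}}^{(1)}=\sum_{e'\neq e}(f(e')\bmod\theta)\,\mathbf{1}[\text{collide}],
\]
with expectation $W_{res}/w$. Since $e$ is low-frequency ($f(e)<\theta$), it lives entirely in layer 1. Its upper-layer estimates $q_l(e)$ for $l\ge 2$ consist purely of noise from the higher digits of heavy edges.

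This is where Theorem~\ref{thm:residual} and Theorem~\ref{thm:variance} enter. Theorem~\ref{thm:residual}, under the condition $p\theta<1$, already shows that the total error is dominated by the bottom layer, i.e.\ $\mathcal{O}(W_{res}/w)$. Theorem~\ref{thm:variance} says the learned decoder down-weights layers carrying pure noise for such items, so it can do no worse than the bottom-layer-only estimator.

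\textbf{Ratio and main obstacle.} Dividing the two expectations, the common factor $1/w$ cancels and leaves $W_{res}/W_{total}$. Because the distribution is heavy-tailed, most of the mass sits in heavy edges, whose residues are below $\theta$ while their frequencies are far larger. I would make this quantitative with a Zipf tail, showing that $W_{res}\le \theta\cdot|\{e\}|$ is much smaller than $W_{total}$.

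The main obstacle is justifying that the residue picture holds despite noisy carry decisions. Carries are triggered by the noisy estimate $Q(M^{(1)},A)$, which overestimates because of collisions. Colliding mass can therefore be carried prematurely, and layer 1 may then hold slightly less or slightly more than $f(e')\bmod\theta$.

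My first attempt would be a perturbation bound. Each erroneous carry moves at most $\theta$ units per triggering event, and overestimation only causes early promotion, never retained excess. From this I would argue that layer-1 mass is at most $W_{res}$ plus a term of order $p\theta\,W_{res}$. That term is absorbed into the constant because $p\theta<1$, which is exactly the condition already used in Theorem~\ref{thm:residual}. This justifies the "$\approx$" in the statement.
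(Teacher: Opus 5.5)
Your proposal follows essentially the paper's own argument: flat noise $\frac{1}{w}\sum_{e'\neq e} f(e')\approx W_{total}/w$, Crane's error reduced via Theorem~\ref{thm:residual} to bottom-layer residue noise $\approx W_{res}/w$, cancellation of the common $1/w$, and $W_{res}\ll W_{total}$ from heavy-tailedness. Your perturbation bound for noisy carries and your variance computation are extras the paper does not attempt. One caution: the appeal to Theorem~\ref{thm:variance} does not give you what you want, because the decoder weights $\mathbf{w}$ are shared by all queries, and heavy edges need large weights on the upper layers. That theorem in fact down-weights the noisy \emph{lower} layers. So the dominance of the bottom layer for a light edge rests on Theorem~\ref{thm:residual} alone, exactly as in the paper.
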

\end{sloppypar}

\begin{proof}
See Appendix~\ref{proof:isolation}.
\end{proof}

%% file: MainText/5-Evaluation.tex
\section{Evaluation}

\begin{table*}[t]
  \centering
  \caption{Accuracy Results (ARE) on Five Real-World Datasets under 64KB memory budget. The "Improv." row indicates the error reduction factor of Crane compared to the best baseline method.}
  \label{tab:main_experiment}
  
  \resizebox{0.75\textwidth}{!}{%
    \begin{tabular}{l cccccc | cccc}
      \toprule
       & \multicolumn{6}{c}{\textbf{Lkml}} & \multicolumn{4}{c}{\textbf{NotreDame}} \\
      \cmidrule(lr){2-7} \cmidrule(lr){8-11}
      Method & 20k & 40k & 80k & 200k & 400k & 800k & 200k & 500k & 1M & 1.5M \\
      \midrule
      TCM    & 1.61 & 3.31 & 6.48 & 17.52 & 34.40 & 67.60 & 38.01 & 96.23 & 194.56 & 292.02 \\
      GSS    & 6.35 & 10.69 & 67.68 & 44.11 & 292.08 & 574.10 & 54.64 & 128.68 & 253.90 & 378.63 \\
      Auxo   & 3.13 & 6.30 & 11.79 & 31.15 & 61.20 & 121.74 & 49.00 & 123.01 & 244.97 & 365.97 \\
      Mayfly & 8.18 & 8.91 & 16.51 & 13.44 & 21.92 & 40.99 & 16.81 & 20.41 & 33.53 & 43.83 \\
      Crane  & \textbf{0.38} & \textbf{0.66} & \textbf{0.82} & \textbf{0.59} & \textbf{0.82} & \textbf{0.74} & \textbf{0.96} & \textbf{1.75} & \textbf{2.33} & \textbf{2.48} \\
      \midrule
      \textbf{Improv.} & \textbf{4.24$\times$} & \textbf{5.04$\times$} & \textbf{7.92$\times$} & \textbf{22.76$\times$} & \textbf{26.75$\times$} & \textbf{55.75$\times$} & \textbf{17.50$\times$} & \textbf{11.66$\times$} & \textbf{14.41$\times$} & \textbf{17.65$\times$} \\
      \bottomrule
    \end{tabular}%
  }

  \vspace{8pt} 

  \resizebox{\textwidth}{!}{%
    \begin{tabular}{l cccc | cccc | cccc}
      \toprule
       & \multicolumn{4}{c}{\textbf{CAIDA}} & \multicolumn{4}{c}{\textbf{WiKiTalk}} & \multicolumn{4}{c}{\textbf{StackOverflow}} \\
      \cmidrule(lr){2-5} \cmidrule(lr){6-9} \cmidrule(lr){10-13}
      Method & 2M & 8M & 16M & 27.1M & 2M & 8M & 16M & 25.0M & 2M & 8M & 16M & 63.5M \\
      \midrule
      TCM    & 78.38 & 329.14 & 677.37 & 1170.95 & 239.71 & 1200.56 & 2813.88 & 4863.01 & 382.46 & 1590.78 & 3280.84 & 12310.60 \\
      GSS    & 1169.30 & 4432.08 & 8398.45 & 13819.50 & 474.94 & 2137.63 & 4817.00 & 8047.13 & 1943.25 & 7793.01 & 15788.10 & 62121.12 \\
      Auxo   & 260.59 & 994.23 & 1961.88 & 3294.99 & 358.48 & 1498.17 & 3103.35 & 4952.20 & 471.28 & 1891.80 & 3793.29 & 13419.60 \\
      Mayfly & 34.67 & 75.41 & 71.04 & 69.24 & 168.65 & 204.76 & 180.58 & 159.49 & 46.62 & 82.93 & 77.77 & 76.79 \\
      Crane  & \textbf{1.90} & \textbf{3.99} & \textbf{4.03} & \textbf{4.27} & \textbf{2.69} & \textbf{6.60} & \textbf{6.07} & \textbf{6.65} & \textbf{4.08} & \textbf{8.59} & \textbf{7.74} & \textbf{7.44} \\
      \midrule
      \textbf{Improv.} & \textbf{18.22$\times$} & \textbf{18.90$\times$} & \textbf{17.61$\times$} & \textbf{16.23$\times$} & \textbf{62.63$\times$} & \textbf{31.01$\times$} & \textbf{29.75$\times$} & \textbf{23.99$\times$} & \textbf{11.44$\times$} & \textbf{9.65$\times$} & \textbf{10.04$\times$} & \textbf{10.32$\times$} \\
      \bottomrule
    \end{tabular}%
  }
\end{table*}

\bbb{Implementation and Environment.}
We implement \textit{Crane} and make the source code publicly available for reproducibility. 
To ensure a comprehensive and fair comparison under strictly constrained memory budgets, we implement four baseline methods with specific adaptations:

\vspace{-0.05in}
\begin{itemize}[leftmargin=1em]

\item \textit{Mayfly}~\cite{feng2023mayfly}, a pioneering neural graph stream summarization structure (for which we adopt the published Subimago configuration without additional fine-tuning on real data); 

\item \textit{TCM}~\cite{tang2016tcm}, a representative hash-based method that projects the graph stream onto a fixed-size adjacency matrix via hash functions~\cite{tang2016tcm}, aggregating edge weights to maintain constant update time within bounded space.

\item \textit{GSS}~\cite{gou2022gss}, an accuracy-oriented sketch utilizing edge fingerprints and square hashing to mitigate collisions~\cite{gou2022gss}. In our implementation, to adapt GSS to strict memory limitations, we compress the fingerprints and employ a binary search strategy to determine the optimal fingerprint size that maximizes accuracy within the given memory budget.

\item \textit{Auxo}~\cite{jiang2023auxo}, a scalable structure leveraging a Prefix Embedded Tree (PET) for logarithmic update complexity~\cite{jiang2023auxo}. Although originally designed for dynamic expansion, we constrain its memory footprint for fair comparison by similarly applying binary search to adjust the fingerprint compression ratio, finding the configuration that fits the memory limit while retaining the PET structure's efficiency.

\end{itemize}

\vspace{-0.05in}
To evaluate performance under strict resource constraints, \textbf{we fix the memory budget to 64KB for all methods}.
For \textit{Crane}, \textit{Auxo}, and \textit{GSS}, which inherently support dynamic memory expansion, we explicitly enforce this 64KB cap as the hard upper limit during testing.
All experiments are conducted on a server equipped with 8 NVIDIA RTX 5090 GPUs (32GB memory each), a 128-core Intel(R) Xeon(R) Gold 6459C CPU, and 720GB of RAM.

\bbb{Datasets.} We evaluate all methods on five widely used real-world graph stream datasets. For each dataset, we further construct subsets of different stream lengths by randomly sampling contiguous segments, as summarized in Table~\ref{tab:main_experiment}.

\begin{itemize}[leftmargin=1em]

\item \textit{Lkml}~\cite{xu2018multi, gou2022gss, feng2023mayfly} is an email-reply network from the Linux Kernel Mailing List, where each edge represents one user replying to another. 
We follow the stream construction in Mayfly~\cite{feng2023mayfly} and treat time-ordered replies as a directed edge stream with unit weights.

\item \textit{CAIDA}\footnote{\url{http://www.caida.org/data/passive/passive_dataset.xml}} is an IP flow dataset derived from the CAIDA 2018 UCSD Anonymized Internet Traces.
We use one minute of traffic containing $27{,}121{,}713$ packets and $850{,}200$ distinct flows, where each flow is modeled as a directed edge between 32-bit encoded IPv4 addresses.

\item \textit{WiKiTalk}\footnote{\url{https://networks.skewed.de/net/wiki_talk}} is a user–user communication network built from Wikipedia talk pages, where an edge from $u$ to $v$ means that user $u$ wrote a message on user $v$'s talk page.
We use the English subset from this collection, sort all interactions by their timestamps, and treat them as a directed edge stream.
The resulting stream contains $24{,}981{,}163$ edges and $9{,}379{,}561$ distinct user pairs.

\item \textit{StackOverflow}~\cite{paranjape2017motifs}\footnote{\url{https://snap.stanford.edu/data/sx-stackoverflow.html}} is a user-user interaction graph from the SNAP collection.
Each temporal interaction (question, answer, or comment) is treated as a directed edge, and we process all edges in chronological order as a graph stream.
The resulting stream contains $63{,}497{,}050$ edges and $36{,}233{,}450$ distinct user pairs.

\item \textit{NotreDame}~\cite{jeong1999diameter}\footnote{\url{https://snap.stanford.edu/data/web-NotreDame.html}} is a web graph from the \texttt{nd.edu} domain.
Nodes are web pages and edges are hyperlinks; we generate a graph stream by taking a random permutation of all directed edges.
In this dataset each edge appears exactly once, so it does not exhibit the repeated-updates pattern of typical graph streams.
We mainly use it to evaluate how well different sketches handle hash collisions under a large number of distinct edges.
The resulting stream comprises $1{,}497{,}134$ directed edges, all of which are distinct, i.e., it contains $1{,}497{,}134$ unique edges.

\end{itemize}

These datasets are standard benchmarks in graph stream summarization, covering IP networks, online Q\&A forums, collaborative editing platforms, and web graphs.

\bbb{Metrics.}
To quantify estimation accuracy, we employ two standard metrics: Average Absolute Error (AAE) and Average Relative Error (ARE). They are defined as:
\begin{equation}
    AAE = \frac{1}{|\mathcal{Q}|} \sum_{q \in \mathcal{Q}} |\hat{f}(q) - f(q)|
\end{equation}
\begin{equation}
    ARE = \frac{1}{|\mathcal{Q}|} \sum_{q \in \mathcal{Q}} \frac{|\hat{f}(q) - f(q)|}{f(q)}
\end{equation}
where $\mathcal{Q}$ denotes the query set, and $f(q)$ and $\hat{f}(q)$ represent the ground truth and the estimated value returned by the sketch, respectively.


\bbb{Parameters.}
We follow the same pre-training protocol as the Larval Phase of Mayfly~\cite{feng2023mayfly}, with specific adaptations to the training stream length to match the scale of our evaluation.
For the main comparative experiments (\S\ref{sec:accuracy}) and robustness analysis (\S\ref{sec:robustness}), we extend the synthetic stream length $\gamma$ up to $600{,}000$.
This ensures the model is exposed to sufficient accumulated collisions and long-term dependencies, mimicking the conditions of large-scale real-world benchmarks.
For ablation studies and parameter sensitivity (\S\ref{sec:ablation}), we set $\gamma$ to $60{,}000$ to facilitate efficient exploration of the hyperparameter space.
We build the distribution pool $\mathcal{P}$ using Zipf distributions where the skewness parameter $\alpha$ ranges from $0.3$ to $0.8$.
The total weight sum for tasks is set to range from 5 to 50 times the number of edges in the graph.
We utilize 50 training steps per task and a learning rate of $5 \times 10^{-4}$.
Regarding the model architecture, the representation network $f_n$ in Crane is implemented as a Multi-Layer Perceptron (MLP) that maps the 32-dimensional binary representation space into a 64-dimensional embedding space:
(1) a linear layer ($32 \to 16$) followed by Batch Normalization (BN) and ReLU activation;
(2) an intermediate linear layer expanding the dimension ($16 \to 36$) followed by BN and ReLU;
and (3) a final linear projection ($36 \to 64$) activated by ReLU.

\vspace{-0.1in}
\subsection{Accuracy}
\label{sec:accuracy}

\begin{figure*}[t]
    \centering
    \includegraphics[width=\linewidth]{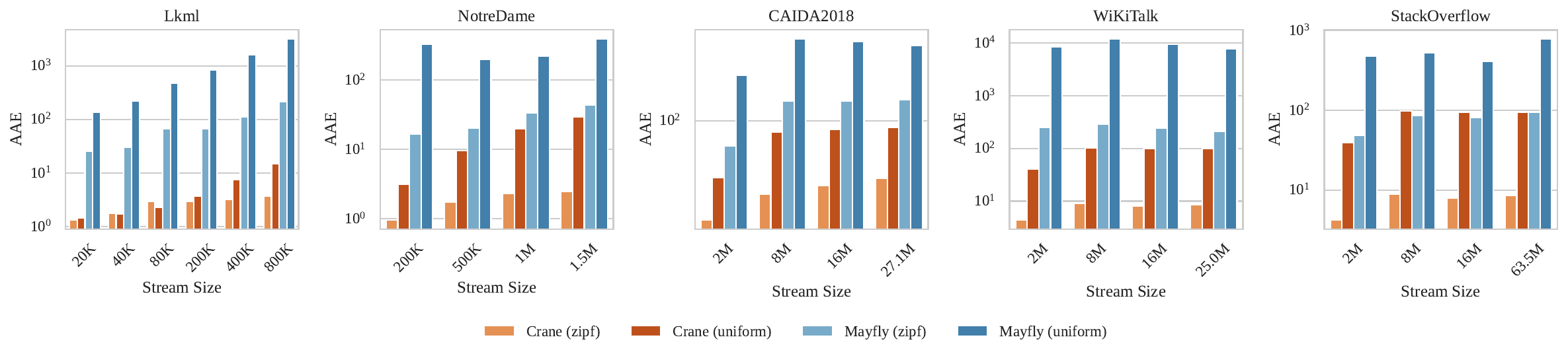}
    \vspace{-0.2in}
    \caption{Robustness comparison.}
    \label{fig:robustness}
\end{figure*}


We evaluate the accuracy of \textit{Crane} on the fundamental task of \textbf{edge frequency estimation} against four baselines on real-world datasets, covering distinct traffic patterns ranging from heavy-tailed network flows to uniform web graphs. Table~\ref{tab:main_experiment} reports the Average Relative Error (ARE) across varying stream lengths under a fixed 64KB memory budget.

\bbb{Heavy-Tailed Traffic.}
The most significant challenges in our evaluation are posed by \textit{CAIDA}, \textit{WiKiTalk}, and \textit{StackOverflow}, which exhibit highly skewed distributions and bursty interaction patterns.
On \textit{CAIDA}, flat sketches (TCM, GSS, Auxo) rapidly saturate, with errors exploding by orders of magnitude as the stream length grows from 2M to 27.1M. This collapse confirms the infrequent flow interference problem: when heavy hitters share the same counters as rare items, the resulting noise is catastrophic.
While \textit{Mayfly} mitigates this partially, it still yields high variance (ARE $>30$).
In stark contrast, \textit{Crane} maintains a tightly bounded ARE (1.90 -- 4.27) even as the stream volume increases by over $13\times$.
Similarly, on the social interaction graphs \textit{WiKiTalk} and \textit{StackOverflow}, Crane outperforms the strongest baseline by a factor of roughly $9\times$ to $60\times$.
This robustness validates our core hypothesis: the hierarchical carry mechanism successfully isolates frequent flows in upper layers, preventing them from cannibalizing the capacity required for the "long tail" in lower layers.

\bbb{Distinct Edge Stress Test.}
The \textit{NotreDame} dataset represents a fundamentally different regime: it consists entirely of distinct edges with unit weights, eliminating heavy hitters and isolating the sketch's ability to handle pure hash collisions.
In this setting, every collision contributes directly to estimation error.
Baselines struggle significantly here; notably, hash-based methods like TCM and Auxo suffer from severe saturation.
However, \textit{Crane} achieves an ARE below 2.5 even at 1.5M distinct edges—an improvement of over $10\times$ compared to \textit{Mayfly}.
This indicates that the hierarchical architecture is not solely dependent on skewness for its advantage. By distributing the load across multiple layers via learnable encoders, Crane effectively reduces the collision probability for any given edge, even in the absence of frequency disparity.

\bbb{Moderate Scale.}
On the \textit{Lkml} email corpus, which represents a more moderate scale and skewness, all methods perform relatively better than on the heavy-tailed datasets.
Nevertheless, \textit{Crane} continues to dominate, suppressing error rates to sub-unit levels (ARE $<1.0$) for most stream lengths.
The fact that Crane maintains a clear margin (improving by $4\times$ to $55\times$) even on this "easier" workload suggests that its modeling capacity does not overfit to extreme conditions. Instead, it yields genuine gains in collision resolution and calibration that persist across varying scales.

\bbb{Summary.}
Overall, the results in Table~\ref{tab:main_experiment} demonstrate three consistent trends.
First, \textit{Crane} delivers uniformly lower error across all datasets, reducing ARE by approximately one order of magnitude compared to state-of-the-art neural methods.
Second, the error growth with respect to stream length is significantly slower for Crane. This aligns with our automatic expansion strategy, which ensures that memory layers remain in a well-utilized but unsaturated regime.
Finally, the performance benefits are consistent across diverse characteristics—from heavy-tailed IP traces to uniform web graphs—confirming that Crane learns a robust, dataset-agnostic summarization logic rather than memorizing specific workload patterns.

\vspace{-0.1in}
\subsection{Robustness}
\label{sec:robustness}

\begin{sloppypar}
To assess whether \textit{Crane} learns a generalized summarization logic or simply overfits to the training distribution, we conduct a cross-distribution evaluation.
We compare models pre-trained on \textbf{Uniform} weight distributions against our default models trained on \textbf{Zipfian} distributions, evaluating them on the five real-world datasets.
Figure~\ref{fig:robustness} illustrates the Average Absolute Error (AAE) across varying stream sizes.

\bbb{Resilience to Distribution Mismatch.}
As evidenced by the comparison between light and dark bars in Figure~\ref{fig:robustness}, \textit{Mayfly} exhibits extreme sensitivity to the training prior.
When trained on Uniform distributions (dark blue) and applied to heavy-tailed real-world streams like \textit{CAIDA} or \textit{WiKiTalk}, \textit{Mayfly}'s error typically explodes by two to three orders of magnitude.
This confirms that flat neural sketches tend to memorize the statistical "shape" of the training data rather than learning an adaptive counting algorithm.

In contrast, \textit{Crane} demonstrates significant robustness.
While training on Uniform data naturally yields higher errors than the matched Zipfian training, the degradation is strictly bounded.
Notably, on highly skewed datasets like CAIDA and StackOverflow, \textit{Crane} trained on the ``wrong" distribution (Uniform) still achieves AAE comparable to or better than \textit{Mayfly} trained on the ``correct" distribution (Zipf).
This suggests that \textit{Crane}'s performance is driven primarily by its hierarchical architecture and carry mechanism, which function effectively even when the learnable encoders are not perfectly tuned to the target skewness.
\end{sloppypar}

\bbb{Generalization on Distinct Edges.}
The \textit{NotreDame} dataset, which consists of distinct edges with unit weights, presents a unique case that structurally resembles a Uniform distribution.
Intuitively, one might expect the Uniform-trained model to outperform the Zipf-trained one in this regime.
However, Figure~\ref{fig:robustness} reveals the opposite: \textit{Crane} (Zipf) consistently achieves lower AAE than \textit{Crane} (Uniform).
We attribute this to the optimization dynamics: the Zipfian training tasks, with their high dynamic range frequencies, impose a stricter requirement on collision resolution.
This forces the decoder to learn a sharper, more discriminative logic (approximating a precise \texttt{min} operator) to isolate heavy hitters.
This ``sharpened" decoding capability generalizes better to rejecting background collision noise, even in distinct-edge streams, whereas Uniform training leads to smoother, less discriminative estimates that are more susceptible to hash collisions.

\vspace{-0.1in}
\subsection{Ablation and Parameter Sensitivity}
\label{sec:ablation}

\begin{table}[t]
\centering
\caption{Ablation Studies on Memory Layer and Mini Batch Size (ARE)}
\label{tab:ablation_combined}
\resizebox{0.8\linewidth}{!}{%
    \begin{tabular}{lccccccc}
    \toprule
    \multirow{2}{*}{Dataset} & \multicolumn{3}{c}{Memory Layer} & \multicolumn{3}{c}{Mini Batch Size} \\
    \cmidrule(lr){2-4} \cmidrule(lr){5-7}
     & 1 & 4 & $\Delta$ & 1 & 4 & $\Delta$ \\
    \midrule
    Lkml          & 5.91  & 5.74  & +0.17  & 5.54   & 5.74  & -0.20   \\
    NotreDame     & 15.62 & 15.07 & +0.56  & 14.88  & 15.07 & -0.18   \\
    CAIDA2018     & 93.14 & 77.48 & +15.67 & 182.54 & 77.48 & +105.06 \\
    WiKiTalk      & 89.51 & 76.81 & +12.70 & 118.89 & 76.81 & +42.09  \\
    StackOverflow & 78.51 & 75.13 & +3.38  & 181.47 & 75.13 & +106.34 \\
    \midrule
    Average       & 56.54 & 50.04 & +6.50  & 100.66 & 50.04 & +50.62  \\
    \bottomrule
    \end{tabular}%
}
\end{table}

\begin{figure}[t]
\centering
\includegraphics[width=0.9\linewidth]{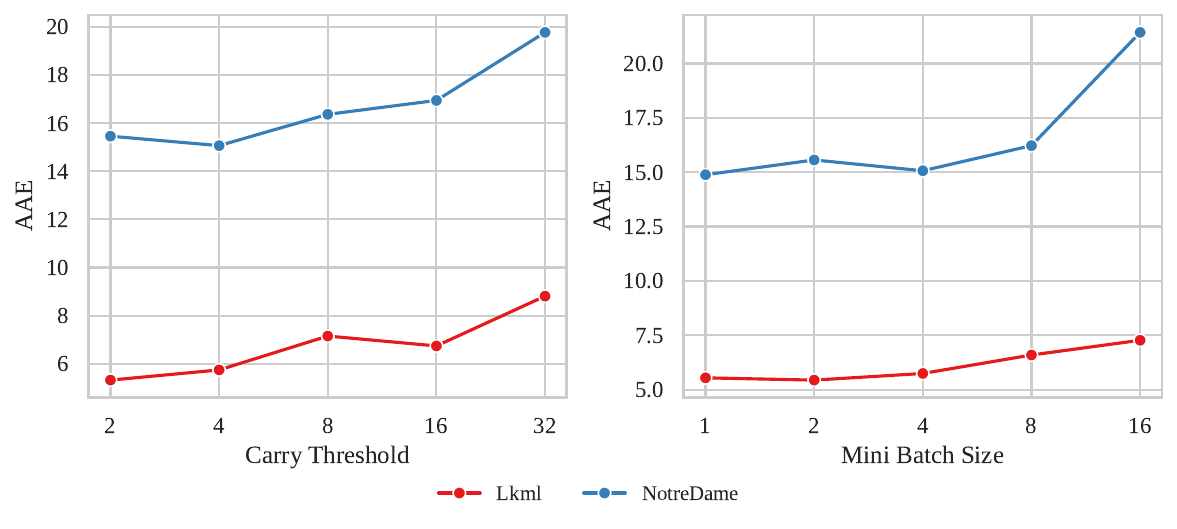}
\vspace{-0.1in}
\caption{Accuracy of \textit{Crane} under different hyperparameter settings on Lkml and NotreDame.}
\label{fig:param_sensitivity}
\vspace{-0.1in}
\end{figure}

To rigorously quantify the contribution of each design component, we conduct ablation studies focusing on the depth of the hierarchical memory and the dynamics of the carry mechanism.
Table~\ref{tab:ablation_combined} summarizes the Average Absolute Error (AAE) across datasets, while Figure~\ref{fig:param_sensitivity} details the sensitivity to the carry threshold and mini-batch size.

\bbb{Effectiveness of Hierarchical Memory.}
The core premise of Crane is that separating items vertically across layers reduces interference between heavy hitters and the long tail.
To validate this, we compare the standard Crane configuration ($N=4$ layers) against a collapsed "flat" variant ($N=1$), where all capacity is forced into a single layer with equivalent total parameters.

As shown in the ``Memory Layer'' columns of Table~\ref{tab:ablation_combined}, the hierarchical structure provides decisive benefits.
On highly skewed datasets such as \textit{CAIDA2018} and \textit{WiKiTalk}, collapsing the hierarchy causes the AAE to degrade by $15.67$ and $12.70$ points, respectively.
This sharp deterioration confirms the limitations of flat neural sketches discussed in \S\ref{sec:intro}: without vertical separation, the learnable encoders cannot sufficiently disentangle the overwhelming gradients of frequent items from the subtle signals of rare ones.
By employing $N=4$ layers, Crane effectively isolates these conflicting populations, reducing the average error by approximately $13\%$ on average compared to the single-layer baseline.

\bbb{Impact of Mini-Batch Strategy.}
\label{sec:ablation-batch}
Contrary to the intuition that strictly sequential processing ($b=1$) yields the highest fidelity, Table~\ref{tab:ablation_combined} reveals that mini-batching ($b=4$) reduces error on heavy-tailed datasets.
We attribute this to the \textit{noise-smoothing effect} of aggregation: summing basis matrices over a batch filters out transient collision noise, preventing premature promotions driven by instantaneous outliers.
Given that the accuracy penalty on lighter datasets is negligible ($\Delta < 0.2$) while throughput improves drastically, we identify $b=4$ as the optimal sweet spot for robustness and efficiency.

\bbb{Sensitivity to Carry Threshold.}
The carry threshold $\theta$ governs the "viscosity" of the vertical flow; it determines how aggressive the model is in promoting items to higher layers.
Figure~\ref{fig:param_sensitivity} (Left) plots the AAE as $\theta$ varies from 2 to 32.

We observe a clear monotonic trend: lower thresholds consistently yield better accuracy.
For instance, on \textit{Lkml}, increasing $\theta$ from 2 to 32 nearly doubles the error (from $\sim5.3$ to $\sim8.8$).
This validates the design motivation of the \textit{Hierarchical Carry Mechanism} (\S\ref{sec:hierarchical-carry-mechanism}).
A lower $\theta$ facilitates the rapid "evacuation" of heavy hitters from the bottom layer.
By aggressively promoting these items, Crane ensures that the bottom layer—which handles the vast majority of unique, low-frequency items—remains sparse and collision-free.
Delaying promotion (high $\theta$) forces large items to linger in lower layers, where they unnecessarily consume capacity and collide with the tail.
Consequently, we select a small constant (e.g., $\theta=4$) to maximize separation efficiency.

\vspace{-0.05in}
\subsection{Throughput}

\begin{figure}[t]
\centering
\includegraphics[width=\linewidth]{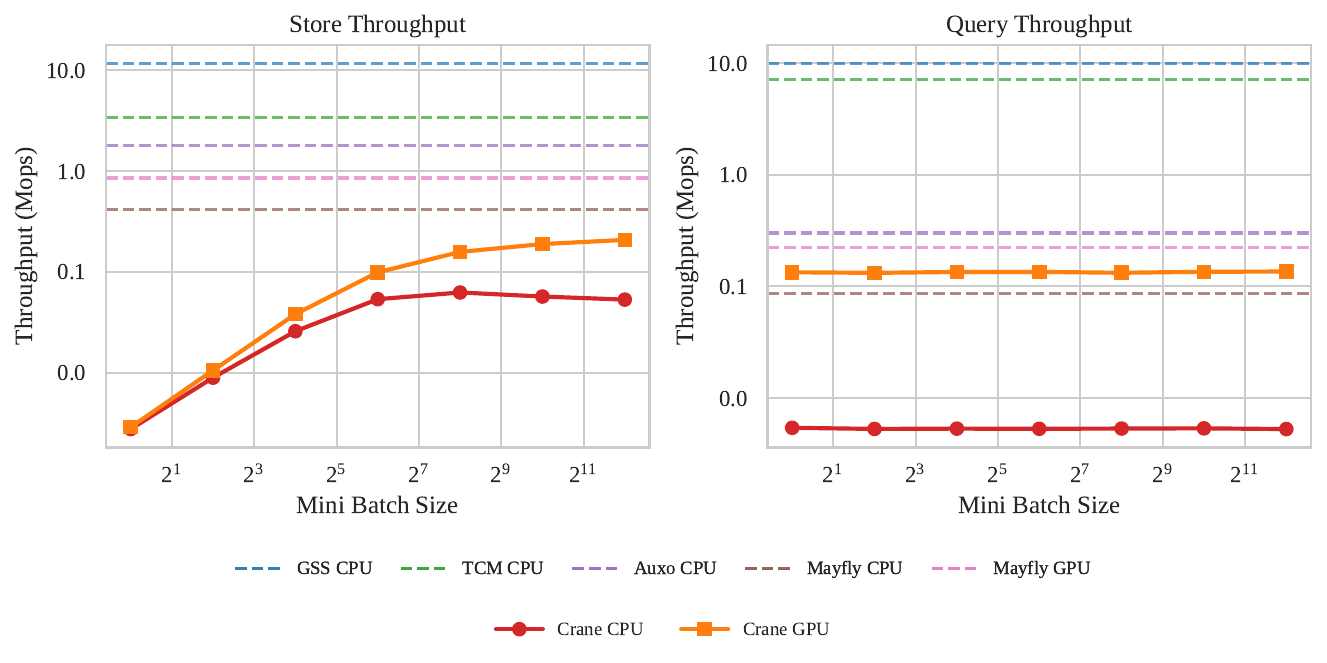}
\vspace{-0.15in}
\caption{Throughput comparison.}
\label{fig:throughput}
\vspace{-0.05in}
\end{figure}

We evaluate the store and query throughput of Crane and compare it against the baseline methods.
All throughput experiments are conducted on the \textit{Lkml} dataset.
Figure~\ref{fig:throughput} reports the results on both CPU and GPU platforms.

\bbb{Impact of Mini-Batch Size.}
Figure~\ref{fig:throughput} (left) illustrates the storage throughput as a function of the mini-batch size $b$.
Crane effectively leverages batch processing to amortize the overhead of neural encoding and memory updates.
As $b$ increases from $2^0$ to $2^{12}$, the throughput on CUDA improves by nearly two orders of magnitude, eventually saturating at approximately 0.2 Mops.
This trend confirms that processing edges in mini-batches is essential for masking the latency of GPU kernel launches and maximizing the utilization of the neural encoders.

\begin{sloppypar}
\bbb{Throughput-Accuracy Trade-off.}
As observed in Figure~\ref{fig:throughput}, Crane exhibits lower throughput compared to traditional hash-based sketches and is slightly slower than the static neural baseline (Mayfly).
This performance gap is expected and stems from two main factors:
(1) the computational cost of the learnable encoders, which is inherently higher than simple hash functions; and
(2) the limited parallelism imposed by our \textit{carry-like promotion mechanism}.
Unlike TCM, which allows fully parallel updates, Crane must enforce conservative consistency checks (as detailed in Algorithm~\ref{alg:crane-carry}) to correctly propagate heavy items through the hierarchy.
However, this represents a deliberate strategic trade-off: we exchange raw processing speed for significantly higher estimation quality.
While TCM achieves high speeds, it suffers from excessive collisions and poor accuracy on skewed streams.
Crane, conversely, achieves orders-of-magnitude reductions in ARE and AAE (as shown in Table \ref{tab:main_experiment}) by investing computational resources into a smarter hierarchical representation.
Furthermore, a sustained throughput of $\approx 0.2$ Mops is sufficient for a wide range of practical applications, including software-defined network monitoring, offline traffic analysis, and log summarization, where accuracy is often the primary bottleneck.
\end{sloppypar}

\bbb{Query Performance.}
Figure~\ref{fig:throughput}(right) shows that the query throughput of Crane remains stable regardless of the memory configuration, maintaining a rate of approximately 0.15 Mops on GPU.
This stability ensures predictable latency for retrieval operations, making Crane a reliable choice for interactive analysis dashboards even as the underlying data volume grows.

\subsection{Cases Studies}

\begin{table}[t]
    \centering
    \caption{Comparison of ARE for \textbf{Node Flux Estimation} between TCM and Crane}
    \resizebox{0.65\columnwidth}{!}{%
        \begin{tabular}{llrrr}
            \toprule
            \textbf{Dataset} & \textbf{Size} & \textbf{TCM} & \textbf{Crane} & \textbf{Impr.} \\
            \midrule
            \multirow{6}{*}{Lkml} & 20k & 77.70 & 0.86 & 90.3$\times$ \\
             & 40k & 168.97 & 0.97 & 174.9$\times$  \\
             & 80k & 297.92 & 1.22 & 245.0$\times$ \\
             & 200k & 964.22 & 2.29 & 420.4$\times$ \\
             & 400k & 1915.64 & 3.09 & 619.7$\times$ \\
             & 800k & 3998.88 & 5.27 & 758.9$\times$ \\
            \midrule
            \multirow{4}{*}{NotreDame} & 200k & 3669.35 & 3.01 & 1219.8$\times$ \\
             & 500k & 8877.09 & 2.81 & 3162.1$\times$ \\
             & 1M & 16664.30 & 14.98 & 1112.5$\times$ \\
             & 1.5M & 24408.70 & 10.20 & 2393.5$\times$ \\
            \bottomrule
        \end{tabular}%
    }
    \label{tab:are_improvement}
\end{table}

To demonstrate Crane's capability beyond simple edge counting, we conduct a case study on \textbf{Node Flux Estimation}.
Node flux (or weighted degree) measures the total incoming or outgoing weight of a specific node by aggregating the frequencies of all its incident edges.
This metric is a fundamental primitive for downstream tasks, such as identifying ``super-spreaders'' in social networks or victim IPs in DDoS attacks.
The objective of this experiment is to evaluate whether Crane can preserve such node-level structural properties under extreme memory compression, where traditional sketches often fail due to error accumulation.

\bbb{Setup.}
We evaluate the Average Relative Error (ARE) of node flux estimates on two representative datasets: \textit{Lkml} (interaction network) and \textit{NotreDame} (web graph).
We compare Crane against TCM, the most representative hash-based baseline.
We omit other datasets from this table as baseline errors diverged to meaningless magnitudes ($>10^4$).

\bbb{Analysis.}
As shown in Table~\ref{tab:are_improvement}, Crane exhibits superior structural robustness.
Traditional methods like TCM suffer from catastrophic hash collisions: a few high-degree nodes pollute the counters shared with varying low-degree nodes, driving the ARE up to $24{,}408$ on \textit{NotreDame}.
In contrast, Crane leverages its hierarchical architecture to physically isolate heavy traffic in upper layers, preventing it from corrupting the estimates of tail nodes in lower layers.
With an improvement ratio of up to \textbf{3,162$\times$}, Crane proves it can serve as a reliable substrate for complex graph analytics where structural fidelity is paramount.

%% file: MainText/6-Conclusion.tex
\section{Conclusion}

In this paper, we propose Crane, a hierarchical neural sketch for graph stream summarization. 
Unlike existing methods, Crane utilizes a carry-like promotion mechanism to effectively separate frequent and infrequent items across layers, solving the issue of interference in limited memory environments. 
Further, Crane supports smooth automatic expansion, allowing it to handle streams of unknown or infinite length efficiently. 
Theoretical proofs confirm the space efficiency and error bounds of our design. 
Extensive evaluations show that Crane consistently outperforms SOTA methods, reducing estimation errors by over an order of magnitude.

%% file: MainText/7-Appendix.tex
{\centering \section*{APPENDIX}}

\section{Mathematical Analysis}
\label{}
\subsection{Proof of Logarithmic Space Efficiency (Lemma~\ref{lemma:space})}
\label{proof:space}
In Crane's hierarchical mechanism, the frequency stored at layer $l$, denoted as $v_l$, conceptually represents the magnitude of the digit in a base-$\theta$ system. The total estimated frequency is approximately $\sum_{l=1}^L v_l \cdot \theta^{l-1}$.
Considering the worst-case scenario where each layer is filled to its capacity (just below the threshold $\theta$) before a carry occurs, the maximum capacity $C(L)$ of a Crane sketch with $L$ layers is given by the geometric series:
\begin{equation}
    C(L) \approx \sum_{l=1}^{L} (\theta - 1) \cdot \theta^{l-1} = (\theta - 1) \frac{\theta^L - 1}{\theta - 1} = \theta^L - 1
\end{equation}
To accommodate a stream where the maximum edge frequency is $F_{max}$, we require $C(L) \geq F_{max}$. Solving for $L$:
\begin{equation}
    \theta^L \geq F_{max} + 1 \implies L \geq \log_\theta(F_{max} + 1)
\end{equation}
Thus, the depth of the hierarchy $L$ scales as $\mathcal{O}(\log_\theta F_{max})$. Since the memory cost is linear with respect to $L$ (i.e., $L \times |M|$), the total space complexity relative to the stream volume is logarithmic.

\subsection{Proof of Logarithmic Amortized Complexity (Theorem~\ref{thm:time})}
\label{proof:time}
Let $L_t$ denote the number of active memory layers when the $t$-th edge $e_t$ is inserted. According to the \textsc{Store} operation, the encoder and memory update are performed for all $L_t$ active layers. Thus, the insertion cost $T(e_t)$ is bounded by:
\begin{equation}
    T(e_t) = \mathcal{O}(L_t) + T_{\text{carry}},
\end{equation}
where $T_{\text{carry}}$ is the cost of carry propagation. The carry propagation forms a convergent geometric series with an expected cost of $\mathcal{O}(1)$. Therefore, the dominant term is $\mathcal{O}(L_t)$.
Due to the carry-like promotion with threshold $\theta$, to accommodate a total weight $W_t$, the required depth is logarithmic: $L_t \approx \mathcal{O}(\log_\theta W_t)$.
Since $W_t \le W_{\text{total}}$ for all $t$, the per-insertion cost is upper-bounded by $\mathcal{O}(\log_\theta W_{\text{total}})$. Summing over the entire stream of $N$ edges:
\begin{equation}
    T_{\text{total}} = \sum_{t=1}^{N} T(e_t) \le \sum_{t=1}^{N} C \cdot \log_\theta W_{\text{total}} = \mathcal{O}(N \log_\theta W_{\text{total}}).
\end{equation}
Dividing by $N$, the amortized complexity is $\mathcal{O}(\log_\theta W_{\text{total}})$. Given $\theta \gg 2$, this remains a small constant.

\subsection{Proof of Collision Probability Decay (Theorem~\ref{thm:collision})}
\label{proof:collision}
Let $e$ be a frequent item promoted up to layer $k$. The estimation accuracy depends on recovering its carried values. Let $C_l$ be the event that edge $e$ collides with another heavy item in layer $l$. Under Assumption 1, $P(C_l) \leq p$.
For an error to persist up to layer $k$, collisions must occur simultaneously at every step of the promotion path. The joint probability is:
\begin{equation}
    P(\text{Collision across } k \text{ layers}) = P(\bigcap_{l=1}^{k} C_l) = \prod_{l=1}^{k} P(C_l) \leq p^k
\end{equation}
Since $p < 1$, the probability $p^k$ decays exponentially as $k$ increases. This implies that the significant ``quotients'' stored in higher layers are protected by an effective collision rate of $p^k$.

\subsection{Proof of Residual Error Bound (Theorem~\ref{thm:residual})}
\label{proof:residual}
The total error is decomposed into noise from the bottom layer and promoted layers.
\begin{itemize}[leftmargin=1em]
    \item \textbf{Bottom Layer Error (Layer 1):} Stores only residual weights. The expected noise follows standard Count-Min logic:
    \begin{equation}
    \begin{aligned}
    E\left[\text {Noise}^{(1)}\right] & =\sum_{e^{\prime} \neq e} P\left(\text{ Collision}^{(1)}\right) \cdot (f(e') \pmod \theta) \\
    & =\frac{1}{w} (W_{res} - f(e)^{(1)}) \approx \frac{W_{res}}{w}
    \end{aligned}
    \end{equation}
    
    \item \textbf{Higher Layers Error (Layer $l > 1$):} Based on Theorem~\ref{thm:collision}, the collision probability is $p^l$. The expected noise is bounded by:
    \begin{equation}
        E[\text{Noise}^{(>1)}] \propto \sum_{l=2}^{L} p^l \cdot \theta^{l-1} \cdot W_{heavy} = p \cdot W_{heavy} \sum_{l=2}^{L} (p \theta)^{l-1}
    \end{equation}
    Given the convergence condition $p \cdot \theta < 1$, the geometric series converges rapidly, suppressing error from heavy hitters.
\end{itemize}
Thus, $E[\text{Error}_{\text{Crane}}] \approx \frac{W_{res}}{w} + \epsilon_{high\_layers}$, where the error is dominated by the small residual mass $W_{res}$.

\subsection{Derivation of Learnable Encoder Orthogonality}
\label{proof:learnable}
Let $A_u, A_v$ be basis matrices for edges $u, v$. Collision interference is proportional to $\langle A_u, A_v \rangle$. The training objective includes reconstruction loss $\mathcal{L}$. The gradient update for encoder parameters $\phi$ is:
\begin{equation}
    \phi \leftarrow \phi - \eta \frac{\partial \mathcal{L}}{\partial \langle A_u, A_v \rangle} \frac{\partial \langle A_u, A_v \rangle}{\partial \phi}
\end{equation}
For frequent items contributing to the loss, this optimization pressure actively drives $\langle A_u, A_v \rangle \to 0$. Thus, the network learns to avoid collisions for dominant items.

\subsection{Proof of Variance Minimization (Theorem~\ref{thm:variance})}
\label{proof:variance}
Let the estimate from layer $l$ be a random variable $Q_l$ with variance $\sigma_l^2$. Lower layers have high variance (high collision), while higher layers are cleaner.
The linear decoder optimizes weights $\mathbf{w} = \{w_1, \dots, w_N\}$ to minimize MSE:
\begin{equation}
    \min_{\mathbf{w}} E [ (y - \sum w_l Q_l)^2 ] = \min_{\mathbf{w}} (\text{Bias}^2 + \text{Variance})
\end{equation}
This is equivalent to Inverse-Variance Weighting ($w_l \propto 1/\sigma_l^2$). The learnable decoder automatically assigns lower weights to noisy lower layers, ensuring $\text{Var}(\hat{y}_{\text{Crane}}) \leq \text{Var}(\hat{y}_{\text{Fixed}})$.

\subsection{Proof of Interference Isolation (Theorem \ref{thm:isolation})}
\label{proof:isolation}

\paragraph{Problem Setup.}
Consider a target edge $e_{target}$ with low frequency (an infrequent item) stored in a sketch of width $w$. We compare the expected estimation error caused by hash collisions in two architectures:
\begin{itemize}[leftmargin=1em]
    \item \textbf{Flat Structure (e.g., Mayfly):} Accumulates the full frequency $f(e')$ of interfering edges into the same memory space.
    \item \textbf{Hierarchical Structure (Crane):} Decomposes frequency into $f(e') = q_{e'} \cdot \theta + r_{e'}$, storing $r_{e'}$ in the bottom layer and carrying $q_{e'}$ to higher layers.
\end{itemize}

\paragraph{Noise in Flat Structure.}
For a standard Count-Min style accumulation (which Mayfly approximates via additive memory updates), the estimator for $e_{target}$ is $\hat{f}(e_{target}) = f(e_{target}) + \text{Noise}$. The expected error is proportional to the sum of weights of all other edges colliding in the same bucket:
\begin{equation}
    \mathbb{E}[\text{Error}_{\text{Flat}}] = \frac{1}{w} \sum_{e' \neq e_{target}} f(e') \approx \frac{W_{total}}{w}
\end{equation}
where $W_{total} = \|\mathbf{f}\|_1$. In heavy-tailed graph streams, $W_{total}$ is dominated by a few heavy hitters, causing significant interference variance for small items.

\paragraph{Noise in Crane.}
As established in Theorem~\ref{thm:residual}, assuming the convergence condition holds, the error in Crane is dominated by the noise in the bottom layer $M^{(1)}$. Due to the Hierarchical Carry Mechanism, any edge $e'$ with frequency $f(e') \geq \theta$ is decomposed. The memory $M^{(1)}$ only stores the residual component $r_{e'} = f(e') \pmod \theta$, where $0 \le r_{e'} < \theta$.
The expected error in the bottom layer is bounded by the sum of these residuals:
\begin{equation}
    \mathbb{E}[\text{Error}_{\text{Crane}}^{(1)}] = \frac{1}{w} \sum_{e' \neq e_{target}} (f(e') \pmod \theta) \approx \frac{W_{res}}{w}
\end{equation}

\paragraph{Comparison.}
The ratio of the expected errors is:
\begin{equation}
    \mathcal{R} = \frac{\mathbb{E}[\text{Error}_{\text{Crane}}]}{\mathbb{E}[\text{Error}_{\text{Flat}}]} \approx \frac{W_{res}}{W_{total}}
\end{equation}
For any edge $e'$ where $f(e') \ge \theta$, we have $f(e') \pmod \theta \ll f(e')$. Specifically, for heavy hitters where $f(e') \gg \theta$, the stored weight in the bottom layer is negligible compared to their true frequency. Since graph streams typically follow a Zipfian distribution where the majority of total weight $W_{total}$ comes from heavy hitters, the residual sum $W_{res}$ effectively strips away the ``heavy'' components.
Thus, $W_{res} \ll W_{total}$, proving that Crane physically isolates small items from the overwhelming noise of heavy hitters, a guarantee that flat structures cannot provide regardless of their encoding method. \qed